\newcounter{algsubstate}
\newcommand{\pmax}{p_{\max}}
\newcommand{\jobs}{J}
\newcommand{\opt}{\mathsf{opt}}
\newcommand{\spt}{\mathsf{spt}}
\newcommand{\opv}{{\rm opt}}
\newcommand{\spv}{{\rm spt}}
\newcommand{\Nset}{\ensuremath{\mathbb N}}
\def\final{0}  
\newcommand{\knote}[1]{{\color{red}[{\tiny Krist\'of: \bf #1}]\marginpar{\color{red}*}}}
\newcommand{\snote}[1]{{\color{blue}[{\tiny Simon: \bf #1}]\marginpar{\color{blue}*}}}
\newcommand{\mnote}[1]{{\color{green}[{\tiny Matthias: \bf #1}]\marginpar{\color{green}*}}}
\newcommand{\tnote}[1]{{\color{cyan}[{\tiny Tam\'as: \bf #1}]\marginpar{\color{cyan}*}}}
\newcommand{\knote}[1]{}
\newcommand{\snote}[1]{}
\newcommand{\mnote}[1]{}
\newcommand{\tnote}[1]{}
\newcommand{\todo}[1]{}
\theoremstyle{plain}
\newtheorem{theorem}{Theorem}
\newtheorem{claim}[theorem]{Claim}
\theoremstyle{definition}
\title{Scheduling with Non-Renewable Resources:\\ Minimizing the Sum of Completion Times\thanks{ Supported by DAAD with funds of the Bundesministerium f{\"u}r Bildung und Forschung (BMBF) and by DFG project MN 59/4-1.}}
\author{Krist{\'o}f B{\'e}rczi\thanks{MTA-ELTE Egerv\'ary Research Group, Department of Operations Research, E{\"o}tv{\"o}s Lor{\'a}nd University, Budapest, Hungary. Email: \texttt{berkri@cs.elte.hu}.}
   \and Tam{\'a}s Kir{\'a}ly\thanks{MTA-ELTE Egerv\'ary Research Group, Department of Operations Research, E{\"o}tv{\"o}s Lor{\'a}nd University, Budapest, Hungary. Email: \texttt{tkiraly@cs.elte.hu}.}
   \and Simon Omlor\thanks{TU Hamburg, Institute for Algorithms and Complexity, Hamburg, Germany. Email: \texttt{simon.omlor@tuhh.de}.}}
\begin{document}
\maketitle
\begin{abstract}
  The paper considers single-machine scheduling problems with a non-renewable resource.
  In this setting, we are given a set jobs, each of which is characterized by a processing time, a weight, and the job also has some resource requirement.
  At fixed points in time, a certain amount of the resource is made available to be consumed by the jobs.
  The goal is to assign the jobs non-preemptively to time slots on the machine, so that at any time their resource requirement does not exceed the available amounts of resources.
  The objective that we consider here is the minimization of the sum of weighted completion times.
  
  We give polynomial approximation algorithms and complexity results for single scheduling machine problems. 
  In particular, we show strong NP-hardness of the case of unit resource requirements and weights ($1|rm=1,a_j=1|\sum C_j$), thus answering an open question of Gy\"orgyi and Kis. 
  We also prove that the schedule corresponding to the Shortest Processing Time First ordering provides a $3/2$-approximation for the same problem. 
  We give simple constant factor approximations and a more complicated PTAS for the case of $0$ processing times ($1|rm=1,p_j=0|\sum w_jC_j$).
  We close the paper by proposing a new variant of the problem in which the resource arrival times are unknown.
  A $4$-approximation is presented for this variant, together with an $(4-\varepsilon)$-inapproximability result.

\medskip

\noindent \textbf{Keywords:} Scheduling, Non-renewable resources, Weighted sum of completion times, Po\-ly\-no\-mial-time approximation scheme, Approximation algorithm, Strong NP-hardness
\end{abstract}

\raisebox{-65ex}[0pt][0pt]{\hspace{78ex}\includegraphics[scale=0.45]{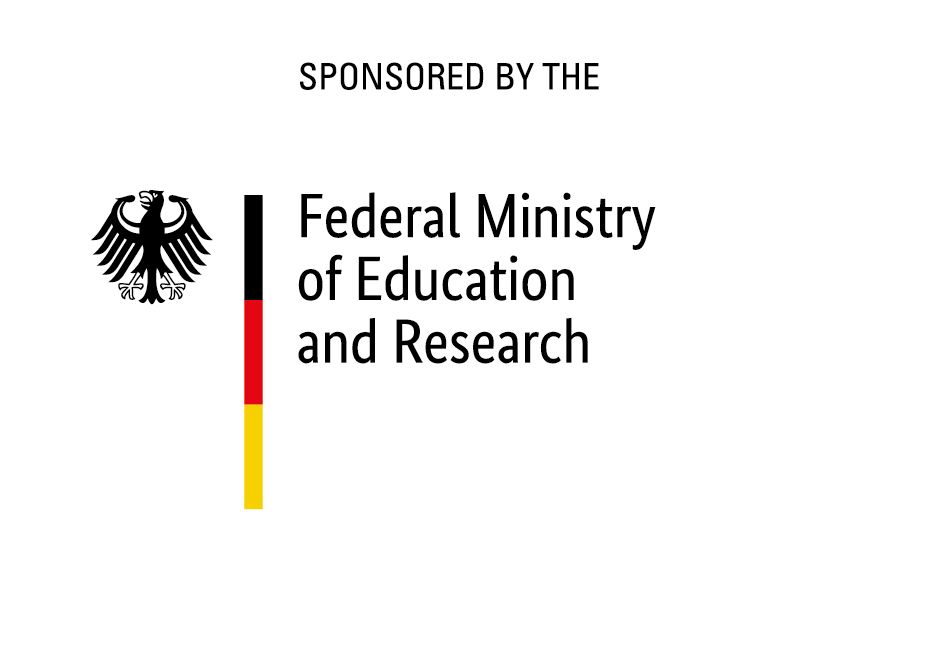}}

\section{Introduction}
\label{sec:introduction}

The problem of scheduling with non-renewable resources appears naturally in practical problems where resources like raw materials, energy, or financial constraints are taken into account. 
These problems are interesting both from the practical and from the theoretical point of view. In the general setting, we are given a set of jobs and a set of machines. 
Each job is equipped with a requirement vector that encodes the needs of the given job for the different types of resources. 
There is an initial stock for each resource, and some additional resource arrival times in the future are known together with the arriving quantities. 
The aim is to find a schedule of the jobs on the machines such that the resource requirements are met.

We will use the standard $\alpha|\beta|\gamma$ notation of Graham, Lawler, Lenstra and Kan \cite{graham1979optimization}. Grigoriev, Holthuijsen and Van De Klundert \cite{grigoriev2005basic} extended this notation by adding the restriction $rm=r$ to the $\beta$ field, meaning that there are $r$ resources (raw materials). 
In the present paper, we concentrate on problem $1|rm=1|\sum w_jC_j$, that is, when we have a single machine, a single resource, and the goal is to minimize the weighted sum of completion times. While there is a long list of results on the approximability of the makespan objective, much less is known about the complexity and approximability of the total weighted completion time objective.

\paragraph{Previous work}
Scheduling problems with resource restrictions were introduced by Carlier \cite{carlier1984problemes} and Slowinski~\cite{slowinski1984preemptive}. 
Carlier settled the computational complexity of several variants for the sinlge machine case~\cite{carlier1984problemes}. In particular, it was shown that $1|rm=1|\sum w_jC_j$ is NP-hard in the strong sense. This was also proved independently by Gafarov, Lazarev and Wener in \cite{gafarov2011single}.
Kis~\cite{kis2015approximability} showed that the problem remains weakly NP-hard even when the number of resource arrival times is $2$. 
On the positive side, he gave an FPTAS for $1|rm=1,q=2|\sum w_jC_j$. 
A variant of the problem where each job has processing time $1$, there are $q=n$ resource arrival times such that $t_i=iM$ and $b_i=M$ for $i=1,\dots,n$, and $M=\sum_{j\in J} a_j/n$ is an integer, was considered in \cite{gafarov2011single}. 
Recently, Gy\"orgyi and Kis~\cite{gyorgyi2018minimizing} gave polynomial time algorithms for several special cases, and also showed that the problem remains weakly NP-hard even under the very strong assumption that the processing time, the resource requirement and the weight are the same for each job. They also provided a $2$-approximation algorithm for this variant, and a polynomial-time approximation scheme (PTAS) when the number of resource arrival times is a constant and the processing time equals the weight for each job, while the resource requirements are arbitrary.

In contrast to the case of total weighted completion objective, much is known about scheduling problems with non-renewable resources for the maximum makespan and maximum lateness objectives. 
Slowinski~\cite{slowinski1984preemptive} studied the preemptive scheduling of independent jobs on parallel unrelated machines with the use of additional renewable and non-renewable resources under financial constraints. 
Toker, Kondakci and Erkip~\cite{toker1991scheduling} examined a single machine scheduling problem under non-renewable resource constraint, using the makespan as a performance criterion.
Xie~\cite{xie1997polynomial} generalized this result to the problem with multiple financial resource constraints. 
Grigoriev, Holthuijsen and Van De Klundert~\cite{grigoriev2005basic} presented polynomial time algorithms, approximations and complexity results for single scheduling machine problems with unit or all equal processing times, and maximum lateness and makespan objectives. 
In a series of papers~\cite{gyorgyi2014approximation,gyorgyi2015reductions,gyorgyi2015approximability,gyorgyi2017approximation,gyorgyi2017ptas}, Gy\"orgyi and Kis presented approximation schemes and inapproximability results both for single and parallel machine problems with the makespan and the maximum lateness objectives. 
In \cite{gyorgyi2018minimizing}, they proposed a bracnh-and-cut algorithm for minimizing the maximum lateness.

\paragraph{Our results}

The first problem that we consider is $1|rm=1,a_j=1|\sum C_j$. The complexity of this problem was posed as an open question in~\cite{gyorgyi2018minimizing}. We show that the problem is NP-hard in the strong sense. 

\begin{restatable}{theorem}{thmsnp}
\label{thm:snp}
$1|rm=1,a_j=1|\sum C_j$ is strongly NP-hard.
\end{restatable}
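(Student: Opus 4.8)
The plan is to reduce from a strongly NP-hard number problem, and the natural candidate is \textsc{3-Partition}: we are given $3m$ positive integers $x_1,\dots,x_{3m}$ with $\sum_i x_i = mB$ and $B/4 < x_i < B/2$, and we must decide whether they can be partitioned into $m$ triples each summing to exactly $B$. Since all processing times and resource requirements in $1|rm=1,a_j=1|\sum C_j$ are essentially tied together ($a_j = 1$ for all $j$), the only freedom we have is in choosing processing times $p_j$, the resource arrival times $t_i$, and the arriving amounts $b_i$; the challenge is that with $a_j = 1$ the resource essentially just counts how many jobs have been started, so the resource constraint only restricts \emph{how many} jobs run before each time $t_i$, not \emph{which} ones. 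We therefore need the scheduling cost $\sum C_j$ — together with carefully chosen processing times — to force the job sizes crossing each resource ``barrier'' to match a valid 3-Partition.

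Concretely, I would introduce for each number $x_i$ a job with processing time built from $x_i$ (e.g.\ $p_i = N + x_i$ for a large scaling constant $N$, so that exactly three ``real'' jobs fit naturally into each block), plus a large number of tiny ``filler'' jobs of processing time $0$ or $\varepsilon$. The resource is doled out in $m$ rounds: release enough resource at time $0$ to start $3$ real jobs plus the fillers for the first block, then release more resource only at time $t_1 = 3N + B$ (the exact length of a perfectly balanced first block), then again at $t_2 = 2(3N+B)$, and so on. If the first three real jobs scheduled have sizes summing to exactly $B$, the block finishes precisely when the next resource arrives and no machine idle time is incurred; if they sum to more than $B$, the block overruns and the next resource arrives late relative to the jobs wanting to run, and if they sum to less than $B$, the machine must idle (since insufficient resource is available), in either case pushing \emph{all} later completion times up. The filler jobs, being numerous and zero/$\varepsilon$-length, make $\sum C_j$ extremely sensitive to how early each block ends: an overrun or idle gap of even one unit in an early block delays a huge number of fillers. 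One then picks $N$ and the number of fillers so that the $\sum C_j$-threshold $K$ is met if and only if every block is perfectly balanced, i.e.\ if and only if the 3-Partition instance is a yes-instance.

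The main obstacle, and the part needing the most care, is ruling out ``cheating'' schedules: because $a_j=1$ treats all jobs alike, the optimizer might try to run fillers before real jobs, or defer a real job across a barrier, or interleave blocks in unintended ways, and we must show none of these beats the honest schedule. The fix is a careful exchange/SPT-type argument — in any optimal schedule we may assume jobs run in nondecreasing order of processing time within the available-resource constraints, and we must quantify the cost of any schedule that is not a disjoint concatenation of $m$ balanced blocks, showing it exceeds $K$. I would also need to confirm the reduction is genuinely polynomial: \textsc{3-Partition} is strongly NP-hard, so the numbers $x_i$ are polynomially bounded, hence $N$, the $p_j$, the $t_i$, the $b_i$, the number of fillers, and $K$ can all be chosen polynomial in the input size, which is exactly what is required to conclude strong NP-hardness of $1|rm=1,a_j=1|\sum C_j$.
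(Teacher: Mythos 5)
Your choice of reduction source is the same as the paper's (\textsc{3-Partition}, exploiting that the numbers may be assumed polynomially bounded), but the proposal stops exactly where the proof begins, and one of the claims it does make is wrong. You assert that if a block's three real jobs sum to less than $B$ the resulting idle time "pushes all later completion times up" --- it does not. Later jobs are gated by resource arrivals, which are fixed, so an underfull block delays nothing downstream; in fact its own jobs finish \emph{earlier}, so underfilling is locally profitable. The only penalty comes from the matching \emph{overfull} block elsewhere, whose overrun delays the jobs of the following block until the next idle period absorbs it. Your argument must therefore be rebuilt around that asymmetry, and you must also handle the fact that the "honest" cost is not a single number: within a balanced block the three real jobs contribute roughly $2p_{(1)}+p_{(2)}$ beyond the block's start, which varies with the partition by up to $\Theta(B)$ per block, i.e.\ $\Theta(nB)$ in total. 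So the threshold $K$ and the filler multiplicity must be calibrated so that the smallest possible structural violation (one unit of overrun, one deferred real job, one extra real job in a block) costs strictly more than this entire slack. None of that accounting is present.

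The second, larger gap is the anti-cheating argument, which you correctly identify as the crux and then skip. Because $a_j=1$ makes the resource a pure counter, nothing in your construction prevents a schedule from starting $4$ real jobs and fewer fillers in one window, or parking a real job across a barrier; ruling these out is where essentially all of the paper's proof effort goes, and it is done with gadgets your construction lacks. The paper does not put zero-length fillers inside each block; instead it separates consecutive normal-job blocks by long gaps of length $K=4nB$ filled by $nK$ unit-length jobs whose resources arrive one per time unit (so each must be consumed immediately on pain of a cascading delay), and it appends $nK$ jobs of length $K$ at the end whose resources arrive exactly at their intended start times. It then works with the shifted objective $\sum_j(C_j-t_j-p_j)$, all of whose terms are nonnegative, so that any single resource left unused for more than $nB$ time units immediately certifies that the objective exceeds the threshold $nB$; this is what makes every deviation from the rigid intended structure provably too expensive. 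Your construction could plausibly be completed along similar lines, but as written the yes/no separation is not established, and the one mechanism you do describe (fillers amplifying idle time) acts in the wrong direction.
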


In the light of Theorem~\ref{thm:snp}, one might be interested in finding an approximation algorithm for the problem. Given any scheduling problem on a single machine, the \textbf{Shortest Processing Time First} (SPT) schedule orders the jobs by processing times, i.e. $p_{\spt^{-1}(i)}\leq p_{\spt^{-1}(i+1)}$ for all $i$. We prove that $\spt$ provides a $3/2$-approximation. Although the algorithm is merely scheduling according to the SPT order, the analysis if the algorithm is rather involved.

\begin{restatable}{theorem}{thmthreehalf}
\label{thm:32}
The SPT schedule gives a $\frac{3}{2}$-approximation for $1|rm=1,a_j=1|\sum C_j$, and the approximation guarantee is tight.
\end{restatable}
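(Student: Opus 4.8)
Re-index the jobs so that $p_1\le\dots\le p_n$; this is precisely the order used by the SPT schedule, ties broken arbitrarily. Let $r_i$ be the earliest time at which $i$ units of the resource are available (so $r_1\le\dots\le r_n$), and put $P_i=p_1+\dots+p_i$, $P_0=0$. Since every job consumes exactly one unit of the resource, the $i$-th job processed by SPT starts at $\max(C_{i-1}^{\spv},r_i)$, so unrolling the recursion gives
\[
  C_i^{\spv}=\max_{1\le j\le i}\bigl(r_j+P_i-P_{j-1}\bigr),
\]
and in particular $C_1^{\spv}\le\dots\le C_n^{\spv}$. The goal is to bound $\sum_i C_i^{\spv}$ against $\mathrm{OPT}=\sum_iC_i^{\opv}$, where we may also assume $C_1^{\opv}\le\dots\le C_n^{\opv}$.

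I would start from two lower bounds on $\mathrm{OPT}$. Relaxing the resource constraints yields an unconstrained single-machine $\sum C_j$ problem, solved optimally by SPT, so $\mathrm{OPT}\ge\sum_iP_i$. Moreover, in any feasible schedule the $i$-th job to complete is the $i$-th to start and hence starts no earlier than $r_i$, while the $i-j+1$ jobs occupying positions $j,\dots,i$ have total length at least $P_{i-j+1}$; this gives $C_i^{\opv}\ge\max_{1\le j\le i}(r_j+P_{i-j+1})$ for every $i$. Neither bound is enough on its own --- already for $n=2$, $p=(0,1)$, $r=(0,1)$ one has $\mathrm{OPT}=\sum_iC_i^{\spv}=2$ while both bounds equal $1$ --- so the heart of the argument is to combine them in a way that reflects the trade-off an optimal schedule faces: pulling long jobs forward to mask resource waits inflates the $\sum_iP_i$-type cost, whereas leaving the short jobs in front leaves the resource-wait cost large, and the optimum can do neither perfectly. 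Concretely I would write $C_i^{\spv}=r_{j(i)}+(p_{j(i)}+\dots+p_i)$ for the maximising index $j(i)$, use the decomposition of the SPT schedule into maximal idle-free blocks, and split the tail $p_{j(i)}+\dots+p_i$ (the sum of the $i-j(i)+1$ largest jobs among the first $i$) into the head $P_{i-j(i)+1}$ --- absorbed by the resource lower bound --- and an ``excess'' that, together with the $r_{j(i)}$ terms, must be charged against $\tfrac12\sum_iP_i$; showing this charging is lossless up to the factor $\tfrac32$ is the crux. I expect the main difficulty to be exactly this balancing: one has to route the per-job SPT cost to the two lower bounds in proportions dictated by the local block structure and then check that the worst case of the resulting system of inequalities is $3/2$ and not, say, $2$, which presumably needs a short but delicate optimisation over how far long jobs can straddle the idle intervals.

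For tightness of the factor I would exhibit, for every $\varepsilon>0$, an instance on which SPT costs at least $(\tfrac32-\varepsilon)\,\mathrm{OPT}$. Such an instance uses several geometrically separated job sizes together with a batch of tiny jobs; the resource releases are timed so that SPT first clears the tiny jobs essentially for free and then idles repeatedly, each time waiting for the single release needed to begin the next (longer) job, whereas the optimum inserts each long job precisely into the idle interval that would otherwise precede it. Evaluating both schedules and letting the number of size scales grow drives the ratio to $3/2$.
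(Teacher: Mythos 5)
Your write-up is a plan rather than a proof: the two lower bounds you state ($\mathrm{OPT}\ge\sum_i P_i$ from dropping the resource constraint, and $C^{\opv}_i\ge\max_j(r_j+P_{i-j+1})$ from the arrival times) are both correct, and you are right that neither suffices alone, but the step you yourself label ``the crux'' --- the charging scheme that routes each SPT completion time to the two bounds in proportions summing to at most $3/2$ --- is never specified, let alone verified. You say you ``expect the main difficulty to be exactly this balancing'' and that it ``presumably needs a short but delicate optimisation''; that balancing \emph{is} the theorem, so as it stands there is a genuine gap. It is not at all clear that a lossless charging of the kind you describe exists: on the extremal instance the first bound gives only $2k^2+O(k)$ against $\mathrm{OPT}=3k^2+O(k)$ and $\spv=\tfrac92k^2+O(k)$, so the charging would have to lean on the second bound in a block-dependent way that you have not exhibited. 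The tightness direction is likewise only sketched; you describe ``several geometrically separated job sizes,'' but no computation is given, and the actual extremal family needs only two processing times ($k$ jobs of length $0$ and $2k$ jobs of length $1$, with one resource at each of times $0,\dots,k-1$, a batch of $k{+}1$ at time $k$, and one at each of $k{+}1,\dots,2k{-}1$), which already drives the ratio to $3/2$.

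For comparison, the paper's proof avoids any charging argument entirely. It performs a sequence of ratio-nondecreasing instance modifications: move each resource arrival to the start time of the corresponding job in $\opt$, eliminate idle time in $\opt$, strip common prefix jobs, and then --- the key step --- show that for any intermediate processing-time value $p$ one can push all jobs of length $p$ either up to the next larger value or down to the next smaller value without decreasing $\spv/\opv$ (a convexity/extreme-point argument comparing $\sum_{j\in J_p}h_j$ against $\sum_j(f_j+g_j)$). This collapses every instance to one with processing times in $\{0,1\}$ and a specific block structure, where both $\spv$ and $\opv$ are closed-form quadratics in two parameters $k_1,k_2$ and the inequality $\tfrac32\opv-\spv\ge\tfrac14(k_1-k_2)^2\ge0$ finishes the job. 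If you want to salvage your approach you would need to actually write down the per-block charging and prove the resulting system of inequalities has worst case $3/2$; otherwise the reduction-to-extreme-points route is the one that is known to close.
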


The second problem considered is the special case when the processing time is $0$ for every job. This setting is a relaxation of those instances where the processing times are short and the resource arrival times are far away from each other. First we give a $6$-approximation based on a non-trivial greedy approach.

\begin{restatable}{theorem}{thmgreedy}
\label{thm:greedy}
There exists a 6-approximation for $1|rm=1,p_j=0|\sum C_jw_j$ with running time $\mathcal{O}(n \log(n))$.
\end{restatable}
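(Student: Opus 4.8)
The plan is to reduce the problem to a clean assignment problem, run a Smith‑rule–flavoured greedy, and charge it against a lower bound built from the ``knapsack profile'' of the resource arrivals, with a geometric grouping of time supplying the constant.

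First I normalize the instance. Since $p_j=0$, in any schedule we may move every job to the last resource arrival time not exceeding its completion time without increasing the cost, so a solution is an assignment $\phi\colon\jobs\to\{1,\dots,q\}$ with $\sum_{j:\phi(j)\le i}a_j\le B_i:=\sum_{k\le i}b_k$ for all $i$ and cost $\sum_j w_j\,t_{\phi(j)}$; I assume feasibility (otherwise nothing to prove), discard arrivals with $b_i=0$, place jobs of zero resource requirement at $t_1$ for free, and take $q\le n$. I relabel the jobs $1,\dots,n$ by non‑increasing density $w_j/a_j$, set $A_k:=\sum_{m\le k}a_m$, and write $g(R):=t_{\min\{i:B_i\ge R\}}$ for the first time the cumulative resource reaches $R$. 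The algorithm processes the jobs in this order and assigns job $k$ to the earliest arrival time keeping the partial schedule feasible; a short exchange/induction argument shows this never conflicts with feasibility and that $\phi(k)\le\min\{i:B_i\ge A_k\}$, i.e. $C_k^{\mathrm{ALG}}\le g(A_k)$. After the $\mathcal O(n\log n)$ sort, maintaining the slack profile $B_i-U_i$ in a balanced search tree supporting range additions and ``leftmost index whose suffix‑minimum is at least $x$'' queries processes each job in $\mathcal O(\log n)$, giving the claimed $\mathcal O(n\log n)$ bound.

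For the lower bound, let $\Phi(R)$ denote the maximum total weight of a job set of total resource requirement at most $R$. In any feasible schedule the weight completed by time $t$ is at most $\Phi(B(t))$ with $B(t)=\sum_{i:t_i\le t}b_i$, so the identity $\sum_j w_jC_j=\int_0^\infty\bigl(W-W(t)\bigr)\,dt$ (with $W=\sum_j w_j$, $W(t)$ the weight completed by $t$) yields $\opt\ge\int_0^\infty\bigl(W-\Phi(B(t))\bigr)\,dt$. Now group time into the phases $[2^\ell,2^{\ell+1})$. From $C_k^{\mathrm{ALG}}\le g(A_k)$, by time $2^\ell$ the algorithm has finished the density prefix $\{1,\dots,\kappa(\ell)\}$ with $\kappa(\ell)=\max\{k:A_k\le B(2^\ell)\}$, whose weight is the density‑prefix knapsack value at budget $B(2^\ell)$, hence at least $\Phi(B(2^\ell))-w_{\kappa(\ell)+1}$ (a greedy density prefix loses at most one item even against the fractional knapsack optimum). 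Substituting this into the profile identity and paying a factor $2$ for rounding completion times up to the right end of their phase gives
\[
\mathrm{ALG}\;\le\;2\,\opt\;+\;\mathcal E,\qquad \mathcal E:=\sum_\ell 2^{\ell}\,w_{\kappa(\ell)+1},
\]
so it remains to bound the accumulated ``one extra item per phase'' error $\mathcal E$ by $4\,\opt$.

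The heart of the proof is this last estimate, which I also expect to be the main obstacle. The indices $\kappa(\ell)+1$ are non‑decreasing in $\ell$, so each bottleneck job $j$ is charged over a contiguous block of phases; since the cumulative budget is still below $A_j$ at the end of that block, the corresponding $2^\ell$'s sum to at most $2\,g(A_j)$, whence $\mathcal E\le 2\sum_{j\in\mathcal B}w_j\,g(A_j)$ over the bottleneck set $\mathcal B$. To charge $\sum_{j\in\mathcal B}w_j\,g(A_j)$ to $\opt$ one notes that just before the cumulative budget reaches $A_j$ every feasible schedule has completed a job set of total resource below $A_j=\sum_{m\le j}a_m$, so some density‑position $\le j$ is still unfinished then. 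The subtlety is that one cannot charge job $j$ to $\opt$'s cost for job $j$ itself (if $a_j$ is small, $\opt$ may finish $j$ very early): the delay $g(A_j)$ must instead be charged, collectively and essentially injectively, to the higher‑density jobs that $\opt$ is provably forced to defer, exploiting that in the density order a higher‑density job cannot be postponed for free. Making this non‑crossing bookkeeping precise yields $\sum_{j\in\mathcal B}w_j\,g(A_j)\le 2\,\opt$ and hence $\mathrm{ALG}\le 6\,\opt$; the geometric phases are essential precisely because they convert the otherwise divergent error sum into one that telescopes against $\opt$.
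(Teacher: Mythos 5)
There is a fatal gap, and it lies not in the bookkeeping you flagged as the main obstacle but one level earlier: the algorithm you propose (forward greedy in non-increasing density order $w_j/a_j$, each job placed at the earliest feasible arrival) is \emph{not} a constant-factor approximation at all. Take two jobs with $w_1=1$, $a_1=1$ and $w_2=M$, $a_2=M+1$, and arrivals $b_1=M+1$ at $t_1=0$, $b_2=1$ at $t_2=T$. Job $1$ has the higher density, so your greedy places it at time $0$; job $2$ then no longer fits at time $0$ and is pushed to time $T$, giving cost $MT$, whereas the optimum places job $2$ at time $0$ and job $1$ at time $T$ for cost $T$. The ratio is $M$, unbounded. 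On this instance your intermediate quantities behave exactly as you describe — $C_k^{\mathrm{ALG}}\le g(A_k)$ holds, the prefix weight is indeed at least $\Phi(B(2^\ell))-w_{\kappa(\ell)+1}$ — but the error term is $\mathcal E\approx 2MT$ while $\mathrm{opt}=T$, so the key claim $\mathcal E\le 4\cdot\mathrm{opt}$ (equivalently $\sum_{j\in\mathcal B}w_jg(A_j)\le 2\cdot\mathrm{opt}$) is false, and no ``non-crossing charging'' can rescue it: the only job the optimum is forced to defer past $g(A_j)$ is some higher-density job of index $\le j$, whose weight can be arbitrarily smaller than $w_j$. The loss of ``one item against the fractional knapsack optimum'' is harmless for knapsack but is exactly the quantity that cannot be charged to the adversary here.

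The paper's greedy avoids this by working in the opposite direction and on the opposite problem: it builds, for each arrival time $t_i$, an approximately \emph{minimum-weight} job set whose resource consumption covers the suffix demand $B_i=\sum_{k\ge i}b_k$, rather than an approximately maximum-weight prefix packing. Crucially, when it selects the least efficient job (minimizing $w_j/a_j$), it only does so among jobs whose individual weight is at most the weight $W$ already committed to later times, and otherwise falls back to a minimum-weight job; this cap is precisely what prevents a single heavy job from blowing up the suffix weight, and yields $W\le 6\cdot\mathrm{opt}_i$ for every $i$ simultaneously. The approximation then follows termwise from the identity $\sum_jw_jC_j=\sum_{i\ge 2}(t_i-t_{i-1})\cdot(\text{weight scheduled at or after }t_i)$, with no geometric grouping of time needed. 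If you want to salvage a forward/packing-style analysis, you would have to control the single-item error $w_{\kappa(\ell)+1}$ by the weight already packed (as in the $(4+\varepsilon)$-approximation of Section~\ref{sec:4+eps}, which uses an FPTAS for the two-arrival subproblem instead of the bare density greedy), not by the optimum's cost.
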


We give a slightly more complicated $(4+\varepsilon)$-approximation that illustrates one of the important ideas of the general PTAS.

\begin{restatable}{theorem}{thmfourpluseps}
\label{thm:4+eps}
There exists a $(4+\varepsilon)$-approximation for $1|rm=1,p_j=0|\sum C_jw_j$ with running time polynomial in $1/\varepsilon$ and the input length.
\end{restatable}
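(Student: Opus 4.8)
The plan is to combine a geometric discretization of the time axis --- the ingredient that reappears, in more elaborate form, in the general PTAS --- with a ``knapsack'' greedy applied level by level, analysed against a fractional lower bound.

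\emph{Reformulation and discretization.} Since every $p_j=0$, a feasible schedule merely assigns each job $j$ a completion time $C_j$ equal to one of the resource arrival times, subject to $\sum_{j:\,C_j\le\tau} a_j\le B(\tau)$ for all $\tau$, where $B(\tau):=\sum_{i:\,t_i\le\tau}b_i$; moreover $\sum_j w_jC_j=\int_0^\infty W(\tau)\,d\tau$, where $W(\tau)$ is the total weight of the jobs with $C_j>\tau$. First I would round every $t_i$ up to the nearest integer power of $1+\delta$ (with $\delta=\Theta(\varepsilon)$) and merge collisions: feasibility is preserved, the optimum increases by at most a factor $1+\delta$, and only $O(\varepsilon^{-1}\log t_q)$ distinct completion ``levels'' remain, which is polynomial in the input. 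Write $\theta_0<\theta_1<\dots$ for these levels and $B_\ell:=B(\theta_\ell)$. It then suffices to approximate, within a factor slightly below $4$, the optimum of this bucketed instance.

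\emph{A lower bound.} For a fixed $\tau$ the smallest possible value of $W(\tau)$ is $W_{\mathrm{tot}}-\Pi(B(\tau))$, where $W_{\mathrm{tot}}:=\sum_j w_j$ and $\Pi(B)$ is the optimum of the knapsack problem ``pick $S$ with $\sum_{j\in S}a_j\le B$ maximizing $\sum_{j\in S}w_j$''. Integrating,
\[
\mathrm{OPT}\ \ge\ \mathrm{LB}\ :=\ \int_0^\infty\bigl(W_{\mathrm{tot}}-\Pi(B(\tau))\bigr)\,d\tau .
\]
I would encode the schedule by a \emph{nested} family $S_0\subseteq S_1\subseteq\cdots$ of job sets with $\sum_{j\in S_\ell}a_j\le B_\ell$ (completing $S_\ell\setminus S_{\ell-1}$ at level $\ell$ is then feasible), so that its cost equals $\sum_\ell(\theta_\ell-\theta_{\ell-1})\bigl(W_{\mathrm{tot}}-w(S_{\ell-1})\bigr)$ up to the last term; the goal is to keep $W_{\mathrm{tot}}-w(S_\ell)$ within a constant factor of $W_{\mathrm{tot}}-\Pi(B_\ell)$ for every $\ell$. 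The natural candidate is to take $S_\ell$ to be, roughly, the prefix of the jobs sorted by non-increasing ratio $w_j/a_j$ that fits into $B_\ell$ --- these prefixes are automatically nested --- and, to cope with a heavy single job, to use instead the better of this prefix and the heaviest job that fits (or an FPTAS solution), which approximates $\Pi(B_\ell)$ up to a constant.

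\emph{The crux, and where the factor is paid.} The obstacle is exactly the tension between \emph{nestedness} and \emph{near‑optimality}: the ``boundary item'' of the greedy knapsack can cost up to one job's weight, and a resource profile that grows very slowly can make this deficit persist over a long time window, so charging the deficit of $S_\ell$ to $\mathrm{LB}$ at the \emph{same} instant fails. The fix I would pursue is to charge it against weight that \emph{every} schedule still has outstanding a bounded factor earlier in time --- equivalently, to run the greedy against a scaled‑down capacity profile (e.g.\ $B_\ell/2$) and argue that the resulting family is ``robustly'' feasible: since the boundary job did not fit together with the strictly higher‑ratio jobs into the available budget, those higher‑ratio jobs, and hence enough weight, cannot all have been completed by the earlier level either. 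Pushing this bookkeeping through (combining the constant loss from the knapsack step, the constant loss from the robust‑feasibility rescaling, and the $1+\delta$ from the time discretization) yields a schedule of cost at most $(4+\varepsilon)\,\mathrm{OPT}$. The running time is polynomial in the input and in $1/\varepsilon$: there are $O(\varepsilon^{-1}\log t_q)$ levels, each handled by one sort by ratio and one knapsack FPTAS call. The main thing to get right is the charging argument in the last paragraph; the rest is routine once the discretization is in place.
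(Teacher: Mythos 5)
Your reduction of the theorem to a \emph{pointwise} guarantee---a nested family $S_0\subseteq S_1\subseteq\cdots$ with $a(S_\ell)\le B_\ell$ and $W_{\mathrm{tot}}-w(S_\ell)\le c\,\bigl(W_{\mathrm{tot}}-\Pi(B_\ell)\bigr)$ at every level---is a sound skeleton (it is precisely the reformulation used in Section~3.5 of the paper, where it is shown that $c=4$ is achievable and that no $c<4$ is), and your lower bound $\mathrm{LB}$ is correct. The gap is that the construction you sketch for the pointwise bound does not deliver it, and that construction is the entire content of the theorem. Three concrete problems. First, a multiplicative approximation of $\Pi(B_\ell)$ (e.g.\ ``the better of the ratio prefix and the heaviest fitting job,'' a $2$-approximation of $\Pi$) gives no multiplicative control of the complementary quantity $W_{\mathrm{tot}}-\Pi(B_\ell)$: if $W_{\mathrm{tot}}=100$ and $\Pi(B_\ell)=99$, a set of weight $50$ is a $2$-approximation of $\Pi$ yet leaves residual weight $50$ against an optimum of $1$. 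You must approximate the covering problem (minimize the weight of the jobs \emph{not} completed by $\theta_\ell$, subject to their requirement being at least $\sum_j a_j-B_\ell$) directly. Second, ``the better of the prefix and the heaviest item'' selects structurally different sets at different levels and destroys the nestedness on which feasibility rests. Third, your charging argument moves the boundary-item deficit ``a bounded factor earlier in time,'' but budgets and times are unlinked: $B(\tau)$ may double only over an enormous time window, so no level a constant factor earlier has a comparably smaller budget, and the deficit cannot be absorbed there. You observe this obstacle yourself; running the greedy against $B_\ell/2$ does not remove it.

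The paper resolves this in two different ways, neither of which appears in your sketch. The actual proof of Theorem~\ref{thm:4+eps} avoids the pointwise requirement altogether: round times to powers of $2$ (factor $2$, not $1+\delta$---the coarseness is essential), solve the two-arrival-time covering problem at each level by Kis's FPTAS to obtain sets $S_i$ with $w(S_i)\le(1+\varepsilon)\sum_{k\ge i}w(J^{\opt}_k)$, schedule each job at the latest level whose $S_i$ contains it, and bound $\sum_i 2^{i}w(S_i)\le(1+\varepsilon)\sum_k w(J^{\opt}_k)\sum_{i\le k}2^{i}\le 2(1+\varepsilon)\sum_k 2^{k}w(J^{\opt}_k)$; the geometric sum is where the second factor of $2$ is paid. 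The pointwise route is carried out in Claim~\ref{cl:unknown1} and needs a genuinely different charging device: near-optimal covering solutions $J_i$ at every level together with the doubling rule $f(i)=\min\{k:\,w(J_k)\le 2w(J_i)\}$, which commits new jobs only at levels where the covering optimum has halved, so the accumulated weight telescopes to $4w(J_i)$. That value-doubling idea is the ingredient missing from your argument.
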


As a next step toward an efficient approximation algorithm, we present a PTAS for the case of a constant number of resource arrival times. This procedure will be used as a subroutine in our algorithm for the general case.

\begin{restatable}{theorem}{thmpreptas}
\label{thm:preptas}
There exists an $(1+\frac{q}{k})$-approximation for $1|rm=1,p_j=0|\sum C_jw_j$ with running time $\mathcal{O}(n^{qk+1})$.
\end{restatable}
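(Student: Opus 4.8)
The plan is to combine a guessing step over the ``heavy'' jobs of an optimum with a fractional relaxation of the placement of the remaining ``light'' jobs, losing a $(1/k)$-fraction of $\opt$ at each of the (at most $q$) places where the relaxation has to be rounded. First I would record the easy structural facts. Since $p_j=0$, in an optimal schedule every job completes at one of the resource arrival times $t_1<\dots<t_q$ (assume $t_1=0$), so a schedule is just an assignment $s\colon J\to\{1,\dots,q\}$ satisfying the prefix constraints $\sum_{j:s(j)\le i}a_j\le B_i$ (with $B_i=\sum_{l\le i}b_l$), with cost $\sum_j w_j t_{s(j)}$. For an optimal $s^*$ write $A_i^*=\{j:s^*(j)\ge i\}$ and $W(X)=\sum_{j\in X}w_j$. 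Two inequalities drive everything: (a) $\opt\ge t_i\,W(A_i^*)$ for every $i$, since each job of $A_i^*$ finishes no earlier than $t_i$; and (b) letting $H_i^*$ be a set of $\min\{k,|A_i^*|\}$ jobs of largest weight in $A_i^*$ and $\theta_i$ its smallest weight (set $\theta_i=-\infty$ if $|A_i^*|<k$), we have, whenever $|A_i^*|\ge k$, $k\theta_i\le W(H_i^*)\le W(A_i^*)$, so $\theta_i\le \opt/(k\,t_i)$; and whenever $|A_i^*|<k$, no light job is placed at any slot $\ge i$ (a light job at such a slot would lie in $A_i^*=H_i^*$).

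Next comes the enumeration. For each $i\in\{2,\dots,q\}$ the algorithm guesses a set $H_i\subseteq J$ of size at most $k$ together with an assignment of its elements to slots $\ge i$; there are $n^{O(qk)}$ such guesses, one of which reproduces $H_i^*$ and the restriction of $s^*$ to $\bigcup_i H_i^*$. Fix that guess, let $\cH=\bigcup_i H_i$ (the heavy jobs, pinned to their guessed slots) and $L=J\setminus\cH$ (the light jobs), and let $\bar B_i=B_i-a(\cH\cap\{s\le i\})$ be the residual prefix budgets (replace $\bar B_i$ by $\min_{l\ge i}\bar B_l$ to make them monotone, and discard the guess if some $\bar B_i<0$). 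Now solve the linear relaxation of the placement of $L$: variables $x_{jl}\ge0$ with $\sum_l x_{jl}=1$, prefix constraints $\sum_{j\in L}a_j\sum_{l\le i}x_{jl}\le\bar B_i$, and the key extra restriction $x_{jl}=0$ whenever $w_j>\theta_l$; minimise $\sum_{j,l}x_{jl}w_jt_l$. Because $s^*$ restricted to $L$ is feasible for this LP --- a light job at slot $m=s^*(j)\ge2$ lies in $A^*_{m}\setminus H^*_{m}$, hence has weight at most $\theta_{m}$, and slot $1$ carries no restriction --- the optimum of the LP is at most the cost that $\opt$ pays on the light jobs.

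The rounding step is then forced. A basic optimal solution of the LP has at most $q$ fractionally assigned jobs (a vertex has support of size at most $|L|+q$, and each job needs a nonzero variable), and each such job has $l_{\max}(j)\ge2$. Round by sending each fractional job $j$ entirely to its largest positive slot $l_{\max}(j)$: moving resource mass to later slots only relaxes the prefix constraints and preserves $w_j\le\theta_{l_{\max}(j)}$, so the resulting integral placement of $L$ is feasible; and the objective increases by at most $\sum_j w_j t_{l_{\max}(j)}\le\sum_j\theta_{l_{\max}(j)}\,t_{l_{\max}(j)}\le q\cdot\opt/k$ by inequality (b). Outputting the cheapest schedule over all guesses, the correct guess yields cost at most $(\opt\text{ on }\cH)+(\text{LP optimum})+q\,\opt/k\le\opt+q\,\opt/k=(1+\tfrac{q}{k})\,\opt$. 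Finally, for constant $q$ the restricted LP is a transportation-type problem that can be solved by a greedy in density order $w_j/a_j$ in near-linear time per guess, which is what gives the $O(n^{qk+1})$ bound rather than a generic LP solve.

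The main obstacle is the extra restriction $x_{jl}=0$ for $w_j>\theta_l$. It is exactly what guarantees that the at most $q$ rounded light jobs land in slots where their weight is controlled by $\opt/(k\,t_l)$, but one must verify with care that (i) $s^*$ remains feasible for the restricted LP, (ii) a vertex of the restricted polytope still has at most $q$ fractional jobs, and (iii) the rounding preserves both the prefix constraints and the restriction simultaneously. Everything else --- the two inequalities, the enumeration, and the additive rounding loss --- is routine; the only genuine additional work is replacing the LP solve by the claimed linear-time greedy, which requires understanding the combinatorial structure of the restricted transportation polytope (in particular that, after taking prefix-minima of the $\bar B_i$, the feasible slots interact in an interval-like way).
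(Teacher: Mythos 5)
Your argument for the approximation ratio is correct, but it takes a genuinely different route from the paper's. Both proofs begin by guessing the $k$ heaviest jobs of the optimum for each arrival time (an $n^{O(qk)}$ enumeration), and in both the extra $\frac{q}{k}$ comes from at most $q$ ``overflow'' jobs, one per slot, each of whose contribution is bounded by $\mathsf{opt}/k$ via the observation that the $k$-th largest weight among jobs finishing at or after $t_i$ is at most $\mathsf{opt}/(k\,t_i)$. The difference is in how the light jobs are placed and how the overflow is identified. The paper fills each slot directly by a greedy rule (repeatedly take the unassigned job minimizing $w_j/a_j$ among those of weight at most a running threshold $W$ until the suffix demand $B_i$ is covered) and charges combinatorially: all greedily chosen jobs except the \emph{last} one per slot have total weight at most the corresponding suffix weight of the optimum, and each of the at most $q$ last jobs has weight at most the threshold, which is at most $\frac1k$ times that suffix weight. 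Your restricted LP with $x_{jl}=0$ whenever $w_j>\theta_l$ is essentially the fractional relaxation of that greedy, and the at most $q$ fractionally split jobs of a basic optimal solution, each pushed to its latest positive slot at cost $\theta_l t_l\le \mathsf{opt}/k$, play exactly the role of the paper's last-job-per-slot. Your version buys a cleaner accounting --- you bound the increase in the objective directly rather than reasoning about suffix weights, and feasibility of the rounding is immediate since resource mass only moves to later slots --- and the feasibility of the optimum for the restricted LP is verified correctly (a light job at slot $m\ge 2$ lies in $A_m^*\setminus H_m^*$, so its weight is at most $\theta_m$; the case $|A_m^*|<k$ forces it to be heavy).

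The one shortfall is the running-time claim. As described, your enumeration guesses for each $i$ a set of at most $k$ jobs \emph{together with} an assignment of each to a slot $\ge i$, which is $(nq)^{\Theta(qk)}$ rather than $n^{qk}$; and the assertion that the restricted transportation LP is solved exactly by a density-order greedy in near-linear time per guess is exactly the combinatorial content of the paper's algorithm, and you have not proved it. Without that, you get ``polynomial time for constant $q$ and $k$'' (solve the LP generically), which suffices for the ratio and for the use of this theorem as a subroutine in the general PTAS, but not for the literal $\mathcal{O}(n^{qk+1})$ bound in the statement. Both issues are repairable: guess, as the paper does, the at most $k$ heaviest jobs placed \emph{exactly} at each $t_m$ (an $n^{qk}$ enumeration from which your $\theta_i$ can be derived), and verify that after taking suffix-minima of the residual budgets the feasible slots of each light job form a prefix, so the density-order greedy is indeed an optimal basic solution of your LP.
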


Finally, we prove the main result of the paper, which is a PTAS for the case of an arbitrary number of resource arrival times.

\begin{restatable}{theorem}{thmptas}
\label{thm:ptas}
There exists a PTAS for $1|rm=1,p_j=0|\sum C_jw_j$.
\end{restatable}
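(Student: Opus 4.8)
The plan is to reduce to a structured instance by rounding and then to solve that instance by a dynamic program over geometrically growing time-windows, using the algorithm of Theorem~\ref{thm:preptas} \emph{inside} each window as a black box. First I would round every weight $w_j$ and every resource arrival time $t_i$ up to the nearest power of $1+\varepsilon$. Since $p_j=0$, the completion time of any job in any feasible schedule is one of the arrival times, so rounding the times increases the cost of every feasible schedule by at most a factor $1+\varepsilon$; rounding the weights costs another $1+\varepsilon$. Arrival times that now coincide are merged, adding up the released amounts and keeping the merged arrival at the latest of the merged original times, which only relaxes the prefix constraints $\sum_{l\le i}a(G_l)\le\sum_{l\le i}b_l$. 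We are left with distinct powers of $1+\varepsilon$ as arrival times $\tau_1<\dots<\tau_m$ with $\tau_{i+1}\ge(1+\varepsilon)\tau_i$. In particular, if all arrival times lie within a constant factor of one another, only $O_\varepsilon(1)$ of them survive and Theorem~\ref{thm:preptas} already finishes the job; the interesting case is when $\tau_m/\tau_1$ is large.

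Next I would partition the bands into \emph{windows}: window $W_r$ consists of all $\tau_i$ in $[R^r,R^{r+1})$, where $R=(1+\varepsilon)^{\Theta(1/\varepsilon)}$ is a constant. Each window then spans a bounded factor and contains only $q_0=O_\varepsilon(1)$ bands, while the number of windows may be as large as $\Theta(\log(\tau_m/\tau_1))$. The dynamic program processes the windows in increasing order of time. Its state records (i) the amount of resource carried over into the current window, discretized to a polynomially bounded set of values (e.g.\ rounded down to a power of $1+\varepsilon$), and (ii) a succinct description of the set of jobs not yet scheduled. For a transition out of $W_r$ I would enumerate, consistently with the state and with the carried-over plus newly released resource, the possible multisets of jobs placed in $W_r$, run the algorithm of Theorem~\ref{thm:preptas} on the restriction of the instance to the $q_0$ bands of $W_r$ (which, since $q_0$ is constant, takes time $n^{O_\varepsilon(1)}$ and returns a within-$W_r$ schedule of cost within $1+\varepsilon$ of optimal for that sub-instance), and pass the resulting leftover resource and updated job-description to the next window. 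Because the cost of a schedule is the sum over windows of the cost incurred in each window, and each window is solved within a multiplicative $1+\varepsilon$, the assembled schedule is a $(1+\varepsilon)^{O(1)}$-approximation; rescaling $\varepsilon$ gives the claimed PTAS, with running time polynomial in the input for each fixed $\varepsilon$. A constant-factor estimate of $\mathrm{OPT}$ from Theorem~\ref{thm:greedy} is used throughout to prune, e.g.\ to discard globally negligible-weight jobs (scheduled at $\tau_m$ at the end at total cost $\le\varepsilon\,\mathrm{OPT}$) and to bound which weights are relevant inside a given window.

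The hard part, and the real content beyond the preceding theorems, will be making component (ii) of the DP state simultaneously \emph{small} (polynomially many possibilities) and \emph{lossless} (the rounded description must still admit a near-optimal continuation). I expect this to require rounding the resource requirements $a_j$ as well, together with a structural/exchange lemma: within a weight class it is never worse to schedule the jobs of smaller resource requirement earlier, and across weight classes higher-weight jobs prefer earlier windows, so there is a near-optimal schedule in which the jobs used up through any window form a ``downward-closed'' family describable by $O_\varepsilon(1)$ rounded counters rather than by an arbitrary multiset. A secondary subtlety is the carried-over resource: rounding it \emph{down} keeps every feasible continuation feasible but may spuriously forbid scheduling a job whose requirement is just above the rounded level, which I would handle by the same rounding of requirements together with a perturbation argument showing that a near-optimal schedule can be taken to use every resource prefix with enough slack.

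Finally, to remove any loss caused by a window boundary splitting an optimal group of jobs, I would invoke the shifting technique: try the $O(1/\varepsilon)$ offsets of the geometric grid $\{R^{\,r+\delta}\}_{r}$, run the whole procedure for each, and keep the best; one offset is guaranteed to align the boundaries so that the jobs straddling them in an optimal solution contribute only an $\varepsilon$-fraction of the cost.
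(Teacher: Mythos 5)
Your outline gets the two ingredients that the paper also uses (geometric rounding of the arrival times, and the constant-$q$ PTAS of Theorem~\ref{thm:preptas} as a subroutine on bounded windows), but the load-bearing step of your argument is exactly the one you defer: a DP state that describes the set of not-yet-scheduled jobs in polynomially many ways while still admitting a near-optimal continuation. As written this is a genuine gap, not a routine detail. Rounding both $w_j$ and $a_j$ to powers of $1+\varepsilon$ yields $\Theta(\log W_{\max}\cdot\log A_{\max})$ classes, so counters per class give a quasi-polynomial, not polynomial, state space; cutting this down to $O_\varepsilon(1)$ relevant classes per window would need the structural lemma you only conjecture. Moreover, one of the two monotonicity properties you invoke is false in general: ``higher-weight jobs prefer earlier windows'' ignores that feasibility is governed by the suffix constraints $\sum_{j:\pi(j)\ge k}a_j\ge B_k$, so a heavy job with a huge resource requirement may be forced late while light, low-requirement jobs go early; the relevant quantity is the efficiency $w_j/a_j$, and even ordering by efficiency does not produce a downward-closed family because of the knapsack-type thresholds $B_k$. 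Without a proof that a polynomial-size lossless state exists, the DP does not close.

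The paper avoids this obstacle entirely by not carrying any summarized state between windows. It processes the (rounded) arrival times from the latest toward the earliest in overlapping blocks of $r$ consecutive times: each block, together with a single artificial time $0$ absorbing all earlier resource, is solved by Theorem~\ref{thm:preptas}; only the jobs assigned to the later half of the block are \emph{fixed}, and the earlier half is re-optimized in the next block. Feasibility is automatic because jobs are committed at late times first, so the unfixed jobs always cover the remaining resource. The error analysis compares each block's output to a feasible solution stitched together from the optimum and the previous block's temporary assignment, and the residual terms are controlled by averaging over the $r/2$ possible offsets $\ell$ of the block grid and taking the best run. Your shifting idea thus does appear, but in the paper it is an averaging device in the \emph{analysis} of accumulated error, not a way to align window boundaries with an optimal solution. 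If you want to salvage your DP route, the missing piece you must supply is precisely the polynomial-size, loss-bounded encoding of the remaining job set; the paper's overlapping-window trick is one way to see that no such encoding is actually needed.
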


The last problem that we is another variant of $1|rm=1,p_j=0|\sum C_jw_j$ where the arrival times are unknown. We denote this problem by $1|rm=1, p_j=0, \text{$t_i$ unkown}|\sum C_jw_j$.

\begin{restatable}{theorem}{thmunknown}
\label{thm:unknown}
There exists a $(4+\varepsilon)$-approximation for $1|rm=1, p_j=0, \text{$t_i$ unkown}|\sum C_jw_j$  with running time polynomial in $1/\varepsilon$ and the input length. Moreover, there is no $(4-\varepsilon)$-approximation algorithm for the problem for any $\varepsilon >0$.
\end{restatable}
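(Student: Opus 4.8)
The plan is to reformulate the problem combinatorially, reduce the competitive ratio to a purely extremal quantity about knapsack values, and then bound that quantity from both sides. Write $w(S):=\sum_{j\in S}w_j$, $W:=w(J)$, let $B_i:=\sum_{i'\le i}b_{i'}$ be the resource available by the $i$-th arrival, and for $1\le k\le q$ let $f_k$ be the optimum of the knapsack problem $\max\{w(S):S\subseteq J,\ \sum_{j\in S}a_j\le B_k\}$. Since all processing times are $0$, a feasible schedule is exactly a chain $\emptyset=T_0\subseteq T_1\subseteq\dots\subseteq T_q=J$ with $\sum_{j\in T_i}a_j\le B_i$ (here $T_i$ is the set of jobs completed by time $t_i$), and for a fixed vector $t_1<\dots<t_q$ its value is $\mathrm{cost}(T,t)=\sum_{i=1}^q t_i(w(T_i)-w(T_{i-1}))=t_1W+\sum_{i=2}^q(t_i-t_{i-1})(W-w(T_{i-1}))$ with $t_0:=0$. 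An algorithm ignorant of the $t_i$ must fix such a chain from $b_1,\dots,b_q$ alone, and since knowing $t_1,\dots,t_i$ at the $i$-th arrival reveals nothing about $t_{i+1},\dots,t_q$ and does not alter the relative cost of the remaining decisions, the online variant coincides with this robust one. I would first prove the exact formula
\[
 \sup_{t_1<\dots<t_q}\frac{\mathrm{cost}(T,t)}{\mathrm{OPT}(t)}=\max_{k:\,f_k<W}\frac{W-w(T_k)}{W-f_k},
\]
where $\mathrm{OPT}(t):=\min_{T'}\mathrm{cost}(T',t)$: ``$\le$'' follows from the displayed expression for $\mathrm{cost}(T,t)$ together with $\mathrm{OPT}(t)\ge t_1W+\sum_{i\ge2}(t_i-t_{i-1})(W-f_{i-1})$ (valid since $w(T'_i)\le f_i$ for every feasible chain $T'$), while ``$\ge$'' follows, for each $k$, by letting $t_1,\dots,t_k\to0$ and $t_{k+1},\dots,t_q\to\tau$, which makes $\mathrm{cost}(T,t)\to\tau(W-w(T_k))$ and $\mathrm{OPT}(t)\to\tau(W-f_k)$ (a knapsack-optimal set at level $B_k$ extends to a feasible chain because the $B_i$ increase). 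Hence the problem reduces to bounding $\sup_{\text{instances}}\min_{\text{feasible }T}\max_{k:f_k<W}\tfrac{W-w(T_k)}{W-f_k}$, and the theorem asserts this equals $4$, with a polynomial-time chain achieving $4+\varepsilon$.

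For the upper bound it suffices, by the formula, to build in time $\mathrm{poly}(\text{input},1/\varepsilon)$ a feasible chain with $W-w(T_k)\le(4+\varepsilon)(W-f_k)$ for all $k$. Group the levels dyadically: $I_j=\{k:W/2^{j+1}<W-f_k\le W/2^j\}$ for $j\ge0$, which partition $\{1,\dots,q\}$ and, since $W-f_k$ decreases in $k$, occur in the order $I_0,I_1,\dots$; let $e_j=\min I_j$, so $f_{e_j}\ge W-W/2^j$. The idea is to choose \emph{nested} sets $S_0\subseteq S_1\subseteq\cdots$ with $\sum_{j'\in S_j}a_{j'}\le B_{e_j}$ and (via the knapsack FPTAS) $w(S_j)\ge(1-\varepsilon')(W-W/2^j)$, and to set $T_k:=S_j$ for $k\in I_j$; this chain is nested and feasible ($B_k\ge B_{e_j}$ on $I_j$), and on $I_j$ it satisfies $W-w(T_k)\le W/2^j+\varepsilon'W\le 4(W-f_k)+\varepsilon'W$, which after rescaling $\varepsilon'$ and a separate easy treatment of the levels where $f_k$ is already within $O(\varepsilon)$ of $W$ (there the unscheduled jobs are necessarily light and may be greedily filled in) gives the bound. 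The crux of this direction, where I expect the real work, is keeping the $S_j$ simultaneously nested and within budget: one cannot simply take each $S_j$ optimal at $B_{e_j}$, since the union $S_0\cup\dots\cup S_j$ may overspend by a $\Theta(\log)$ factor, so the $S_j$ must be grown incrementally from $S_{j-1}$ inside the enlarged budget $B_{e_j}$, and one has to show such growth still recovers weight within a constant factor of the residual $W/2^j$ — and it is precisely this loss that replaces the ``free-nesting'' constant $2$ by $4$. (Alternatively, one checks that the $(4+\varepsilon)$-approximation of Theorem~\ref{thm:4+eps} never consults the $t_i$ and that its analysis bounds the cost against a lower bound on $\mathrm{OPT}(t)$ valid for all $t$ — consistent with the formula above being $t$-independent — so that it transfers verbatim.)

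For the lower bound I would construct, for each $\rho<4$, an instance on which every feasible chain has $\tfrac{W-w(T_k)}{W-f_k}\ge\rho$ for some $k$. A minimal ``two-option'' gadget (three arrival times, a light job $p$ schedulable at the first and a disjoint heavy job realizing $f_2$, with budgets so tight that taking $p$ forecloses the heavy job) already forces the ratio above the golden ratio $\tfrac{1+\sqrt5}{2}$, and a chain of $\Theta(\log)$ such gadgets pushes it past $2$. To reach $4$ one needs a recursively defined instance with many arrival times whose feasible chains have a staircase form — at each level one may keep the light jobs accumulated so far, or switch to a heavier ``block'' which, once taken, consumes essentially the whole current budget and so forecloses every heavier block until the last arrival — with block weights and budgets tuned by a fixed-point computation that balances, over all choices $a$ of where to switch, the level-$(a-1)$ ratio $\tfrac{W}{W-w(T_{a-1})}$ against the later ratio $\tfrac{W-w(T_a)}{W-f_k}$, so that no chain is within $4-\varepsilon$ at all levels. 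Since every algorithm must fix its chain from the $b_i$ alone, this rules out all $(4-\varepsilon)$-approximations, even exponential-time ones. I expect this construction — forcing the worst-case ratio to converge to exactly $4$ rather than saturating at the golden ratio or at $2$, which requires getting the layered conflict structure and the accompanying min--max analysis over all ``switching'' chains exactly right — to be the main obstacle of the whole proof.
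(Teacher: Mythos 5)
Your reduction of the competitive ratio to the purely combinatorial quantity $\max_k \frac{W-w(T_k)}{W-f_k}$ is correct and is exactly the reduction the paper uses (stated there in complementary form: the suffix set $S_i=\{j:\pi(j)\ge i\}$ must be a simultaneous $\alpha$-approximate minimum-weight set consuming at least $\sum_{i'\ge i}b_{i'}$ resources). But both halves of your argument stop precisely at the steps you yourself flag as ``the real work''. For the upper bound, the dyadic grouping of levels is the right instinct, but the construction of the nested sets $S_0\subseteq S_1\subseteq\cdots$ is not carried out: you note that growing $S_j$ incrementally from $S_{j-1}$ inside the budget $B_{e_j}$ might fail to recover a constant fraction of the residual weight, and you give no argument that it does (it is not obvious --- $S_{j-1}$ may already occupy most of $B_{e_j}$ with low-density items, leaving too little room for the extension). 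Your fallback, that the algorithm of Theorem~\ref{thm:4+eps} transfers verbatim, is false: that analysis bounds $\sum_i 2^{i-2}w(S_i)$ against the optimum and relies on the arrival times being powers of $2$; the induced suffix sets $\bigcup_{k\ge i}S_k$ need not satisfy any per-level $O(1)$ guarantee. The paper resolves the nesting-versus-feasibility tension by working entirely on the suffix side: letting $J_i$ be a minimum-weight set consuming at least the suffix resource $B_i$ and $f(i)=\min\{k:\, w(J_k)\le 2w(J_i)\}$, it takes the suffix sets to be \emph{unions} of the $J_{f(\cdot)}$ at these value-doubling indices. Unions of covering sets are automatically feasible, and the weights telescope as $w(S_i)\le 2w(J_i)+4w(J_{i'})\le 4w(J_i)$ using $w(J_{i'})<w(J_i)/2$; this is where the constant $4$ actually comes from, and it is the one idea your upper bound is missing.

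For the lower bound you describe only the intended shape of a construction (a staircase of blocks forcing a chain of ``switches'') and explicitly defer the instance and its min--max analysis as ``the main obstacle''; as written nothing is proved. The paper's instance is concrete --- $(n-1)m$ jobs with $a_i=2^{-i}$, slowly decreasing weights $w_i=n-i/m$, and quantities $b_i=2^{-i-1}$ --- and the analysis tracks the forced decreasing sequence $j_1>j_2>\cdots$ of jobs scheduled late, showing that if every level had ratio below $\beta<4$ then the ratio $z_i'/z_i$ (with $z_i=w_{j_i}$ and $z_i'=\sum_{k<i}z_k$) would grow by a factor $\gamma>3/(\beta-1)>1$ per step, eventually exceeding $3$ and contradicting the assumption. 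Without this (or an equivalent) explicit construction and potential argument, the $(4-\varepsilon)$-inapproximability half remains unestablished.
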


\paragraph{Organization}

The rest of the paper is organized as follows. Basic notation and terminology are introduced in Section~\ref{sec:preliminaries}. A strong NP-hardness proof and a $3/2$-approximation algorithm for problem $1|rm=1,a_j=1|\sum C_j$ are given in Section~\ref{sec:1}. Results on problem $1|rm=1,p_j=0|\sum C_j$ are discussed in Section~\ref{sec:2}, where a greedy $6$-approximation, a $(4+\varepsilon)$-approximation, a PTAS for the case of constant resource arrival times, and a PTAS for the general case are presented. We close the paper by proposing a new variant of the problem in which the resource arrival times are unknown. A $4$-approximation is presented for this case, together with an $(4-\varepsilon)$-inapproximability result.

\section{Preliminaries}
\label{sec:preliminaries}

Throughout the paper, we will use the following notation. 
We are given 
a set $J$ of $n$ \textbf{jobs}. 
Each job $j\in J$ has a non-negative integer processing time $p_j$, a non-negative weight $w_j$, and a resource requirement $a_j$. 
The \textbf{resources arrive at time points $t_1,\dots,t_q$}, and the \textbf{amount of resource that arrives at time point $t_i$} is denoted by $b_i$.
We might assume that $\sum_{i=1}^q b_i = \sum_{j=1}^n a_j$ holds.
We will always assume that $t_1=0$, as this does not effect the approximation ratio of our algorithms.

The jobs should be processed non-preemptively on a single machine. A \textbf{schedule} is an ordering of the jobs, that is, a mapping $\sigma: J \rightarrow [n]$, where $\sigma(j)=i$ means that job $j$ is the $i$th job scheduled on the machine. 
The \textbf{completion time} of job $j$ in schedule $\sigma$ is denoted by $C^\sigma_j$. 
We will drop the index $\sigma$ if the schedule is clear from the context. 
In any reasonable schedule, there is an idle time before a job $j$ only if there is not enough resource left to start $j$ after finishing the last job before the idle period.
Hence the completion time of job $j$ is basically determined by the ordering and by the resource arrival times, as $j$ will be scheduled at the first moment when the preceding jobs are already finished and the amount of available resource is at least $a_j$. 


\section{The problem \texorpdfstring{$1|rm=1,a_j=1|\sum C_j$}{1|rm=1,aj=1|sum Cj}} \label{sec:1}

\subsection{Strong NP-completeness} \label{sec:np}

The aim of this section is to prove Theorem~\ref{thm:snp}.

\thmsnp*
\begin{proof}
Recall that all $a_j$ and $w_j$ values are 1, and each job has an integer processing time $p_j$.
The number of resource arrival times is part of the input.

We prove NP-completeness by reduction from the {\sc 3-Partition} problem. 
The input contains numbers $B \in \Nset$, $n \in \Nset$, and $x_j \in \Nset$ $(j=1,\dots,3n)$ such that $B/4 < x_j< B/2$ and $\sum_{j=1}^{3n} x_j=nB$.
A feasible solution is a partition $J_1,\dots,J_n$ of $[3n]$ such that $|J_i|=3$ and $\sum_{j \in J_i} x_j=B$ for every $i \in [n]$.
In contrast to the {\sc Partition} problem, the {\sc 3-partition} problem remains NP-complete even when the integers $x_j$ are bounded above by a polynomial in $n$. 
That is, the problem remains NP-complete even when the numbers in the input are represented  as unary numbers \cite[Pages 96–105 and 224]{gj}.

Let $K=4nB$.
The reduction to $1|rm=1,a_j=1|\sum C_j$ involves three types of jobs.
\begin{description}
  \item{\textbf{Normal jobs}} These correspond to the numbers $x_j$ in the {\scshape 3-Partition} instance, so there are $3n$ of them and the processing time $p_j$ of the $j$-th normal job is $x_j$.
  \item{\textbf{Small jobs}} Their processing time is 1 and there are $nK$ of them.
  \item{\textbf{Large jobs}} Their processing time is $K$ and there are $nK$ of them.
\end{description}

There are also three types of resource arrivals:
\begin{description}
  \item{\textbf{Type 1}} Three resources arrive at times $i(B+K)$ ($i=0,\dots,n-1$).
  \item{\textbf{Type 2}} One resource arrives at $i(B+K)+j$\\ ($i=0,\dots,n-1$, $j=B,\dots, B+K-1$).
  \item{\textbf{Type 3}} One resource arrives at $n(B+K)+iK$ ($i=0,\dots,nK-1)$.
\end{description}

\begin{figure}[h!]
\centering
\includegraphics[width=\linewidth]{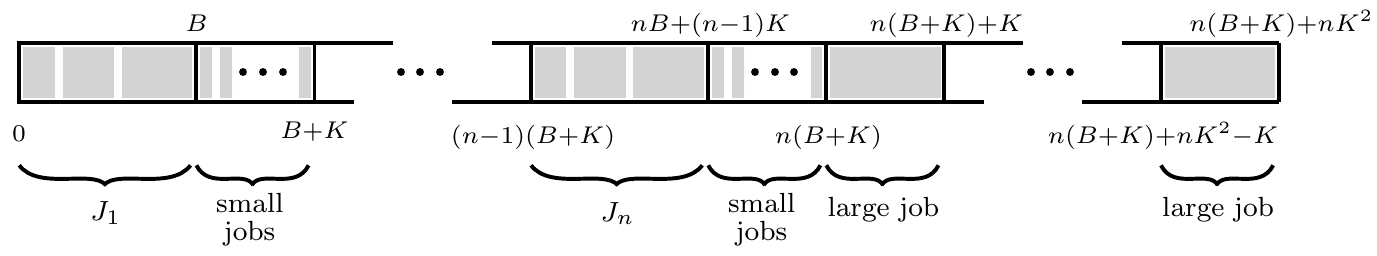}
\caption{The schedule corresponding to a feasible solution of {\sc 3-Partition}.}
\label{fig:3partred}
\end{figure}

Suppose that the {\sc 3-Partition} instance has a feasible solution $J_1,\dots,J_n$.
We consider the following schedule $S$: resources of Type 1 are used by normal jobs, such that jobs in $J_i$ are scheduled between $(i-1)(B+K)$ and $iB+(i-1)K$ (in $\spt$ order).
Type 2 resources are used by small jobs that start immediately.
Type 3 resources are used by the large jobs that also start immediately at the resource arrival times (see Figure~\ref{fig:3partred}).

Instead of $\sum C_j$, consider the equivalent shifted objective function $\sum (C_j-t_j-p_j)$, where $t_j$ is the arrival time of the resource used by job $j$ and $p_j$ is its processing time -- we assume without loss of generality that resources are used by jobs in order of arrival.
Note that all terms of $\sum (C_j-t_j-p_j)$ are nonnegative.
As small jobs and large jobs start immediately at the arrival of the corresponding resource in schedule $S$, their contribution to the shifted objective function is $0$.
The jobs in $J_i$ have total processing time $B$, and their contribution to the shifted objective function is two times the processing time of the shortest job plus the processing time of the second shortest job, which is at most $B$. 
Hence the schedule $S$ has objective value at most $nB$.

We claim that if the {\sc 3-Partition} instance has no feasible solution, then the objective value of any schedule is strictly larger than $nB$.
First, notice that if a large job is scheduled to start before time $n(B+K)$, then $\sum (C_j-t_j-p_j)$ has a term strictly larger than $nB$ as there is a resource that arrives while the large job is processed and is not used for more than $nB$ time units.
Similarly, if the first large job starts at $n(B+K)$ but uses a resource that arrived earlier, then the resource that arrives at $n(B+K)$ is not used for more than $nB$ time units. We can conclude that the first large job uses the resource arriving at $n(B+K)$. 

If the first large job does not start at $n(B+K)$, then all large jobs have positive contribution to the objective value, so again, the objective value is larger than $nB$.
We can therefore assume that the large jobs start exactly at $n(B+K)+iK$ ($i=0,\dots,nK-1)$ and that there is no idle time before $(B+K)n$. In particular this means all other jobs are already completed at $(B+K)n$.

Consider Type 2 resources arriving at $i(B+K)+j$ ($j=B,\dots, B+K-1$) for some fixed $i$.
If the first or the second resource is not used immediately, then none of the subsequent ones are, so the objective value is more than $nB$.
Hence, the first resource must be used immediately by a small job.

Suppose that some resource in this interval is used by a normal job.
If it is followed by a small job, then we may improve the objective value by exchanging the two. Thus, in this case, we can assume that the last resource of the interval is used by a normal job, and also 
the Type 1 resources arriving at $(i+1)(B+K)$ are used by normal jobs. But this is impossible, because normal jobs have processing time at least $B/4+1$, and a small job starts at time $(i+1)(B+K)+B$.

To sum up, we can assume that all resources of Type 2 are used immediately by small jobs.
This means that normal jobs have to use resources of Type 1, and must exactly fill the gaps of length $B$ between the arrival of resources of Type 2.
This is only possible if the 3-partition instance has a feasible solution, concluding the proof of Theorem~\ref{thm:snp}.
\end{proof}

\subsection{Shortest processing time first for unit resource requirements} \label{sec:spt}

In the previous section, we have seen that scheduling with a non-renewable resource is strongly NP-hard already for unit resource requirements.
Now we show that scheduling the jobs according to an $\spt$ ordering provides a $3/2$-approximation for the problem with unit weight and unit resource requirements, thus proving Theorem~\ref{thm:32}.

\thmthreehalf*
\begin{proof}

  To prove the theorem consider any instance $I$.
  We denote the completion times for the $\spt$ ordering by $C_j$ and their sum by $\spv$. 
  Furthermore, let $S_{\spt^{-1}(i)}:=C_{\spt^{-1}(i)}-p_{\spt^{-1}(i)}$ denote the starting time of the $i$th job in the $\spt$ schedule.
  Let $\opt$ be the optimal schedule for $I$.
  We denote the completion times for $\opt$ by $C_j'$ and their sum by $\opv$.
  Furthermore, let $S_{\opt^{-1}(i)}':=C_{\opt^{-1}(i)}'-p_{\opt^{-1}(i)}$ denote the starting time of the $i$th job in the optimal schedule.
  
  Our strategy is to simplify the instance by revealing its structural properties while not decreasing $\frac{\spv}{\opv}$. This way we get an upper bound for the approximation factor.
  We first consider the resource arrival times.

  \begin{claim}
    We may assume that the $i$th resource arrives at $S_{\opt^{-1}(i)}'$ for $i=1,\dots,n$.
  \end{claim}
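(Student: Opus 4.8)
The plan is to show that, starting from an arbitrary instance $I$, we can move the resource arrival times so that the $i$th resource arrives exactly at $S_{\opt^{-1}(i)}'$, the starting time of the $i$th job in the optimal schedule, without decreasing the ratio $\spv/\opv$. The key observation is that, since all $a_j=1$, exactly one resource is consumed by each job, and (as noted in the preliminaries) in any reasonable schedule jobs consume resources in order of arrival; so we may speak of ``the $i$th resource'' being used by the $i$th job of a schedule. In the optimal schedule $\opt$, the $i$th job cannot start before the $i$th resource has arrived, so the $i$th resource arrives at time $\le S_{\opt^{-1}(i)}'$. I would therefore define a new instance $I'$ by \emph{delaying} the $i$th resource arrival to exactly $S_{\opt^{-1}(i)}'$ for each $i$.

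The main steps are as follows. First, I would check that $\opt$ remains a feasible schedule for $I'$ with the \emph{same} completion times: by construction the $i$th job of $\opt$ starts exactly when its resource arrives and after its predecessor finishes, so nothing changes, and hence the optimum value of $I'$ is at most $\opv$ (it could only be smaller, but we only need ``$\le$''). Second — and this is the crux — I must argue that the $\spt$ value does not \emph{decrease} when we pass to $I'$; equivalently, delaying resources can only delay (weakly) the completion times in the $\spt$ schedule. This should follow from a monotonicity argument: if we process a fixed job order and postpone some resource arrivals, then every job's start time, computed greedily as the first moment when enough resource is available and the previous job is done, can only move later. One has to be slightly careful because after the resources are delayed the $\spt$ order of the jobs is unchanged (processing times are untouched), so the same greedy rule applies and a straightforward induction on the position $i$ in the $\spt$ order gives $S_{\spt^{-1}(i)}$ in $I'$ at least as large as in $I$, hence $\spv(I') \ge \spv(I)$. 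Combining the two steps, $\spv(I')/\opv(I') \ge \spv(I')/\opv(I) \ge \spv(I)/\opv(I)$, so it suffices to prove the approximation bound on instances of the normalized form, which is exactly what the claim asserts.

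The step I expect to be the main obstacle is the monotonicity of the $\spt$ completion times under delaying resources. While intuitively clear, it needs the right formalization: one should express the start time of the $i$th $\spt$ job via the recursion $S_{\spt^{-1}(i)} = \max\{\,C_{\spt^{-1}(i-1)},\, r_i\,\}$ where $r_i$ is the arrival time of the $i$th resource (here using $a_j=1$ so exactly the first $i$ resources suffice for the first $i$ jobs, and no fewer), and then show by induction that replacing each $r_i$ by $r_i' \ge r_i$ yields $S'_{\spt^{-1}(i)} \ge S_{\spt^{-1}(i)}$ and $C'_{\spt^{-1}(i)} \ge C_{\spt^{-1}(i)}$. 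A subtlety worth a sentence is the assumption $\sum_i b_i = \sum_j a_j$ together with $a_j=1$: it guarantees that exactly $n$ units of resource arrive in total and hence the ``$i$th resource'' is well defined and the recursion above is exact; if one started from an instance violating this, a preliminary reduction step (truncating or merging resource arrivals) restores it without affecting either schedule's value.
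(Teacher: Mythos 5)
Your proposal is correct and follows exactly the paper's argument: the $i$th resource is used by job $\opt^{-1}(i)$, so it arrives no later than $S_{\opt^{-1}(i)}'$, and delaying it to exactly that time leaves $\opt$ unchanged while the $\spt$ completion times can only (weakly) increase. The paper states the monotonicity of $\spt$ under delayed arrivals without proof; your recursion $S_{\spt^{-1}(i)}=\max\{C_{\spt^{-1}(i-1)},r_i\}$ and induction is a valid way to make that step explicit.
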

  \begin{proof}
  As the $i$th resource is used by job $\opt^{-1}(i)$, the arrival time of that resource is at most $S_{\opt^{-1}(i)}'$. 
  If we move the arrival time of the resource to exactly $S_{\opt^{-1}(i)}'$, then $\opt$ does not change and $\spt$ cannot decrease.
  \end{proof}
  
The next claim shows that we can get rid of the idle times in the optimal schedule.

  \begin{claim}\label{claim:noidle}
    We may assume that there is no idle time in schedule $\opt$, that is, $S_{\opt^{-1}(i)}'=C_{\opt^{-1}(i-1)}'$ for $i=2,\dots,n$.
  \end{claim}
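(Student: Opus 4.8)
The plan is to show that, given the previous claim (the $i$th resource arrives exactly at $S'_{\opt^{-1}(i)}$), we may push all resource arrivals earlier — and hence eliminate idle time in $\opt$ — without decreasing the ratio $\spv/\opv$. First I would observe that if $\opt$ has idle time, then there is some smallest index $i$ with $S'_{\opt^{-1}(i)} > C'_{\opt^{-1}(i-1)}$ (with the convention $C'_{\opt^{-1}(0)}=0$). I propose to decrease the arrival time of the $i$th resource — and simultaneously the arrival times of all resources $i, i+1, \dots$ that are currently ``pinned'' to later start times — by the common amount $\delta := S'_{\opt^{-1}(i)} - C'_{\opt^{-1}(i-1)}$, i.e. shift the entire tail of the arrival sequence left by $\delta$. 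Since we only move arrival times earlier, the $\spt$ schedule cannot get better: $\spv$ does not decrease. So it suffices to check that $\opv$ does not increase, and that repeating this brings us to a schedule with no idle time.

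The key step is to verify that after this shift the ordering $\opt$ is still feasible and its completion times only go down (or stay the same). Feasibility is immediate: the jobs $\opt^{-1}(1),\dots,\opt^{-1}(i-1)$ and their resource arrivals are untouched, while jobs $\opt^{-1}(i),\dots,\opt^{-1}(n)$ each still see exactly the resources that were available to them before (the whole tail moved together), just possibly sooner. For the completion times: jobs before position $i$ are unchanged, and for jobs from position $i$ onward I would argue by induction on the position that the new start time equals $\max$ of the (shifted) arrival time and the previous job's new completion time, which is at most the old value; in particular job $\opt^{-1}(i)$ now starts at $C'_{\opt^{-1}(i-1)}$, closing the gap. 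Hence $\opv$ weakly decreases, so $\spv/\opv$ weakly increases, and the number of idle gaps in $\opt$ strictly decreases. Iterating a finite number of times removes all idle time, giving $S'_{\opt^{-1}(i)} = C'_{\opt^{-1}(i-1)}$ for $i=2,\dots,n$.

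One subtlety I would want to be careful about: after shifting the tail of the arrival sequence, it is conceivable that an arrival time $t_i$ slips below $t_{i-1}$, i.e. the arrivals are no longer sorted. This is harmless — we may always re-sort the arrival times, since only the multiset of arrival quantities (all equal to $1$ here) and their values matter for feasibility — but it means I should phrase the invariant in terms of the multiset of arrival times rather than a fixed indexing, or alternatively only shift the contiguous block of arrivals that currently lie strictly between $C'_{\opt^{-1}(i-1)}$ and $S'_{\opt^{-1}(i)}$ exclusive, which is empty here by minimality of $i$, so in fact the single arrival at position $i$ (together with any later arrivals pinned to it) is all that needs to move. I expect this bookkeeping about which arrivals move together to be the only real obstacle; the monotonicity of completion times under earlier resource arrivals is routine. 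Combining this claim with the preceding one, we may henceforth assume $\opt$ runs with no idle time and the $i$th resource arrives exactly when the $i$th job of $\opt$ starts.
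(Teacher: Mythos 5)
Your overall move---shifting the tail of resource arrivals left by $\Delta = S'_{\opt^{-1}(i)} - C'_{\opt^{-1}(i-1)}$ to close the gap, and iterating---is exactly the paper's. But the step where you compare the two objective values is wrong, and it is the only step that carries any real content. You write that ``since we only move arrival times earlier, the $\spt$ schedule cannot get better: $\spv$ does not decrease.'' This has the direction backwards: making resources available earlier can only help a schedule, so the SPT completion times weakly \emph{decrease}, and $\spv$ may strictly drop (for instance if some SPT job was idling precisely because it was waiting for one of the shifted resources). Consequently your reduction to ``it suffices to check that $\opv$ does not increase'' fails: if $\spv$ drops a lot while $\opv$ drops only a little, the ratio $\spv/\opv$ decreases, and the modification is no longer approximation-ratio-preserving, which is the whole point of the claim.

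The correct argument is quantitative, as in the paper. After the shift, each completion time $C'_{\opt^{-1}(i')}$ with $i'\ge i$ decreases by \emph{exactly} $\Delta$, while each $C_{\spt^{-1}(i')}$ with $i'\ge i$ decreases by \emph{at most} $\Delta$; the latter is an induction on the position, since the new start time is the maximum of an arrival time that dropped by at most $\Delta$ and a predecessor completion time that, inductively, dropped by at most $\Delta$. Hence $\opv$ decreases by at least as much as $\spv$ does, and since $\spv\ge\opv$, subtracting at least as much from the denominator as from the numerator of a fraction that is at least $1$ cannot decrease it. Your feasibility bookkeeping and the worry about sortedness are fine (indeed the $i$th arrival lands at $C'_{\opt^{-1}(i-1)}\ge S'_{\opt^{-1}(i-1)}=t_{i-1}$, so no re-sorting is needed), but without the exact-versus-at-most comparison the proof does not go through.
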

  \begin{proof}
    Suppose that there is some $i$ such that $t_i> C'_{ \opt^{-1}(i-1)}$.
    We reduce $t(i')$ by $\Delta=t(i)- C'_{ \opt^{-1}(i-1)}$ for all $i'\geq i$.
    Then for each $i'\geq i$, the completion time $C_{\opt^{-1}(i')}'$ decreases by $\Delta$.
    For each $i'\geq i$, the completion time $C_{\spt^{-1}(i')}$ decreases by at most $\Delta$.
    This follows from the fact that the resource arrival times decrease by $\Delta$ and the completion time of the previous job can decrease by at most $\Delta$ (which can be shown by induction).
    Hence $\opv$ decreases by at least as much as $\spv$.
    Since $\spv\geq \opv$, the ratio $\frac{\spv}{\opv}$ will not decrease by this change.
  \end{proof}

  Next, we modify the processing times.

  \begin{claim}
    We may assume that $p_{\opt^{-1}(1)}>p_{\spt^{-1}(1)}$ and that $p_{\spt^{-1}(1)}=0$.
  \end{claim}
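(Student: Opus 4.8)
The plan is to verify the two assertions by two local modifications of the instance, each of which does not decrease $\frac{\spv}{\opv}$, so that the claim follows. For the \emph{first assertion}: since $t_1=0$ and every job needs exactly one unit of the resource, the $\spt$ rule starts a job of globally minimum processing time, so $p_{\opt^{-1}(1)}\ge p_{\spt^{-1}(1)}$ always holds. Suppose equality holds. Then $\opt^{-1}(1)$ also has minimum processing time, so swapping it with $\spt^{-1}(1)$ inside $\opt$ leaves $\opv$ unchanged and we may take $\opt^{-1}(1)=\spt^{-1}(1)=:j_0$. I would delete $j_0$ together with the first resource. By the previous two claims the second resource arrives at $t_2=S'_{\opt^{-1}(2)}=p_{j_0}$, hence in the reduced instance the truncation of $\opt$ (respectively the $\spt$ schedule with $j_0$ removed) keeps every remaining job at exactly the same time as before; consequently the reduced values satisfy $\opv'\le\opv-p_{j_0}$ (the truncated optimal schedule is one feasible candidate) and $\spv'=\spv-p_{j_0}$, so $\frac{\spv'}{\opv'}\ge\frac{\spv-p_{j_0}}{\opv-p_{j_0}}\ge\frac{\spv}{\opv}$ using $\spv\ge\opv$. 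Iterating this, with the number of jobs as a termination measure (if it drops to $1$, or if at some stage all processing times coincide, then every ordering gives the same cost and $\spt$ is already optimal), we reach an instance with $p_{\opt^{-1}(1)}>p_{\spt^{-1}(1)}$.

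For the \emph{second assertion}, assume $p_{\opt^{-1}(1)}>p_{\spt^{-1}(1)}=:\delta$; if $\delta=0$ there is nothing to prove, so let $\delta>0$. I would lower the processing time of $\spt^{-1}(1)$ to $0$ and change nothing else, in particular keeping all resource arrival times. The $\spt$ order is unchanged. In the $\spt$ schedule job $\spt^{-1}(1)$ now completes at $0$ instead of $\delta$, while its successor was already starting at $t_2=S'_{\opt^{-1}(2)}=p_{\opt^{-1}(1)}>\delta$ by the previous two claims and the first assertion; hence no other $\spt$-start time moves and $\spv'=\spv-\delta$. By the first assertion $\spt^{-1}(1)=\opt^{-1}(k)$ for some $k\ge 2$; shrinking this job makes it finish $\delta$ earlier, but $\opt^{-1}(k+1)$ was starting at $t_{k+1}=C'_{\opt^{-1}(k)}$ (resource times equal the $\opt$-start times and $\opt$ has no idle), which still bounds the machine's release time, so no later $\opt$-job moves and the (still feasible) modified $\opt$ schedule certifies $\opv'\le\opv-\delta$. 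Since $\opv\ge p_{\opt^{-1}(1)}>\delta$ and $\spv\ge\opv$, we again obtain $\frac{\spv'}{\opv'}\ge\frac{\spv-\delta}{\opv-\delta}\ge\frac{\spv}{\opv}$.

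The main obstacle is not any single estimate but keeping the structural assumptions mutually consistent. After the second modification the optimal schedule exhibited above may contain idle time and its start times may no longer coincide with the (unchanged) resource arrival times, so one must re-apply the first two claims; one then checks that this touches only resource times, hence preserves $p_{\spt^{-1}(1)}=0$, and that if a new optimum were to place a zero-length job first one simply re-runs the deletion step of the first assertion. Thus the genuinely delicate point is to order the reductions — always giving priority to the job-deletion step, which strictly decreases $n$ — so that an instance satisfying all four properties (the two earlier claims together with both assertions of this claim) is reached.
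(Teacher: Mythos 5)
Your proof is correct and follows essentially the same route as the paper's: repeatedly delete a common first job (with its resource) so that both objective values drop by the same amount, then shrink $p_{\spt^{-1}(1)}$ to $0$, using $\spv\ge\opv$ to see that neither step decreases $\frac{\spv}{\opv}$. You are in fact somewhat more careful than the paper about tie-breaking, termination, and the need to re-apply the earlier claims afterwards.
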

  \begin{proof}
    If both schedules start with the same job, then we can remove the job from the instance and decrease $b_1$ by $1$.
    Then $\opv$ decreases by the same amount as $\spv$.
    We can repeat this until the schedules start with jobs of different processing times.
    Now $p_{\opt^{-1}(1)}>p_{\spt^{-1}(1)}$, since $\spt$ starts with the shortest job.
    Decreasing the processing time of job $\spt^{-1}(1)$ to $0$ (without changing any arrival time) decreases $\spv$ by $p_{\spt^{-1}(1)}$ and $\opv$ by at least $p_{\spt^{-1}(1)}$. We can eliminate idle times in the new optimal schedule as in the proof of Claim \ref{claim:noidle}.
  \end{proof}

  \begin{claim}
    We may assume that $p_j \in \{0, 1\}$ for all $j \in J$.
  \end{claim}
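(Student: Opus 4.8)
The plan is to run a reduction of the same flavour as the previous claims, but now with the orderings $\spt$ and $\opt$ held fixed and only the processing times varying. Having already arranged that the $i$th resource arrives at $S'_{\opt^{-1}(i)}$ and that $\opt$ has no idle time, the arrival times are determined by the processing-time vector $p=(p_j)_{j\in J}$ through $t_i(p)=\sum_{r<i}p_{\opt^{-1}(r)}$, and the value of the ordering $\opt$ on $p$ is the linear function $\opv(p)=\sum_{j}\bigl(n+1-\opt(j)\bigr)p_j$. Unrolling $C_{\spt^{-1}(i)}=\max\bigl(t_i,C_{\spt^{-1}(i-1)}\bigr)+p_{\spt^{-1}(i)}$ gives $C_{\spt^{-1}(i)}(p)=\max_{1\le m\le i}\bigl(t_m(p)+\sum_{\ell=m}^{i}p_{\spt^{-1}(\ell)}\bigr)$, so $\spv(p)=\sum_iC_{\spt^{-1}(i)}(p)$ is a finite maximum of linear functions, hence convex, piecewise linear, and positively homogeneous; the same holds for $\opv$, so $g(p):=\spv(p)/\opv(p)$ is invariant under scaling $p\mapsto\lambda p$. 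Since any modification of $p$ below can only push the true optimum below the value $\opv(p)$ of the fixed ordering $\opt$, it suffices to produce a vector $p$ with all entries in $\{0,1\}$ on which $g$ is at least as large as its current value.

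Using homogeneity, $g$ — which is continuous on the polytope $D=\{p\ge 0:\ p_{\spt^{-1}(1)}\le\cdots\le p_{\spt^{-1}(n)},\ p_{\spt^{-1}(1)}=0\}$ away from the origin, since $\opv(p)\ge\sum_jp_j$ there — attains its supremum over $D$ at some $p^\ast$ with $\sum_jp_j=1$; note that $0$ is always among the entries of $p^\ast$. Now suppose $p^\ast$ has at least three distinct values and pick one, $v$, that is neither $0$ nor the largest; let $V=\{j:p^\ast_j=v\}$, let $v_-<v$ be the next smaller value (possibly $0$) and $v_+>v$ the next larger. Moving the jobs of $V$ in lockstep, $p(s)=p^\ast+s\,\mathbbm 1_V$ for $s\in[v_--v,\,v_+-v]$, traces a segment inside $D$ whose endpoints are exactly the points where $V$ merges with the $v_-$- or the $v_+$-class, each having fewer distinct values. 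Along the segment $\spv(p(s))$ is convex in $s$, while $\opv(p(s))=\opv(p^\ast)+s\sum_{j\in V}\bigl(n+1-\opt(j)\bigr)$ is affine and stays strictly positive (the largest-value jobs lie outside $V$). Reparametrising by $u=\opv(p(s))>0$ turns $g(p(s))$ into $\phi(u)/u$ with $\phi$ convex on a positive interval, and since $\tfrac{d}{du}\bigl(u\phi'(u)-\phi(u)\bigr)=u\phi''(u)\ge 0$, the function $\phi(u)/u$ is quasiconvex; hence $g(p(\cdot))$ is maximised at an endpoint of the segment. That endpoint is still a maximiser of $g$, with strictly fewer distinct values, so iterating we reach a maximiser with values $\{0,v\}$ only; rescaling by $1/v$ (which preserves $g$) puts all processing times in $\{0,1\}$. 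If an earlier normalisation is broken in the process — e.g. now $p_{\opt^{-1}(1)}=0$ — we simply re-apply the earlier claims, all of which preserve membership in $\{0,1\}$.

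The part I expect to need the most care is exactly the behaviour of $g$ along these lockstep segments: everything hinges on the quasiconvexity of $\spv(p(s))/\opv(p(s))$, and that in turn on $\opv$ being affine without a zero on the relevant range, so that $u\phi'(u)-\phi(u)$ is monotone and changes sign at most once. With this in hand each merge step is a single one-dimensional comparison; without it one is forced into a case analysis over the many breakpoints of the piecewise-linear function $\spv$. The remaining verifications — that the perturbations keep the chain $p_{\spt^{-1}(1)}\le\cdots\le p_{\spt^{-1}(n)}$ valid, that $t_i(p)=\sum_{r<i}p_{\opt^{-1}(r)}$ is maintained so that $\opt$ stays idle-free, and that the process terminates because the number of distinct values strictly decreases — are routine.
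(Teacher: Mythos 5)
Your proof is correct, and it follows the same reduction skeleton as the paper's: both arguments shift an entire class of jobs sharing a common processing time in lockstep until it merges with the next smaller or next larger class, with the arrival times tied to the prefix sums of $\opt$ and with the value of the fixed ordering $\opt$ serving as an upper bound on the optimum of the modified instance. Where you genuinely differ is in the key lemma that certifies that one of the two merge directions does not decrease the ratio. The paper argues locally and combinatorially: it computes that $\opv$ changes by exactly $\Delta\sum_{j\in J_p}h_j$, lower-bounds the change of each $C_j$ in the $\spt$ schedule by $\Delta(f_j+g_j)$ via a delicate analysis of the first job after the last idle period (the trickiest step of its proof), and then picks the direction by comparing $\sum_j(f_j+g_j)/\sum_{j\in J_p}h_j$ with $\spv/\opv$. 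You instead exploit the max-plus unrolling to see that $\spv$ is a pointwise maximum of linear functions of $p$, hence convex along the lockstep segment, while $\opv$ is affine and strictly positive there (the top class is never the one being moved), and conclude from quasiconvexity of $\phi(u)/u$ for convex $\phi$ on a positive interval --- equivalently, from the sublevel sets $\{u:\phi(u)-cu\le 0\}$ being intervals --- that the ratio is maximized at an endpoint. Your route trades the paper's bookkeeping with $f_j,g_j,h_j$ for a structural convexity fact and obtains both one-sided bounds simultaneously; the paper's route is more elementary and needs no global representation of $\spv$. The loose end you flag --- that the normalisations $p_{\spt^{-1}(1)}=0$ and $p_{\opt^{-1}(1)}>0$ may need to be re-established afterwards, and that re-running the earlier reductions preserves $\{0,1\}$-valued processing times --- is indeed routine.
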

  \begin{proof}
    Let $\pmax=\max_{j\in J}p_j$ be the maximum processing time.
    Scaling the processing times by dividing all processing and arrival times by $\pmax$ has no effect on $\frac{\spv}{\opv}$, hence we may assume that $\pmax=1$.
    Now assume that there is a job $j'$ with $p=p_{j'} \in (0, 1)$.
    Let $p_+=\min\{p_j ~|~ j\in J, p_j>p\}$ and $p_-=\max\{p_j ~|~ j\in J, p_j<p\}$.
    Let $J_p=\{j \in J~|~ p_j=p\}$ be the set of jobs with processing time $p$.
    We will show that we can either increase the processing time of all jobs in $J_p$ to $p_+$ or decrease the processing time of all jobs in $J_p$ to $p_-$ without decreasing $\frac{\spv}{\opv}$.

    For $j \in J$, let $h_j$ denote the number of jobs processed after $j$ plus $1$ with respect to schedule $\opt$, i.e. $h_j=n-\opt^{-1}(j)+1$.
    Changing the processing times of all jobs in $J_p$ by some $\Delta \in [p_- -p, p_+-p]$ and adopting the arrival times of the resources will increase $\opv$ by $\Delta\sum_{j \in J_p}h_j$. Indeed, every time we change the processing time of one job $j$, the completion time of $j$ and of all jobs after $j$ will be increased by $\Delta$. Note that $\Delta$ can be negative, which means the completion times can also 'decrease' by $|\Delta|$.
    
%
%
    
    Notice that the order of the jobs in the $\spt$ schedule is not changed.
    Consider the $\spt$ schedule before the change.
    Let job $j\in J$ be any job, and let $j_0$ be the the first job that is processed after the last idle time before the starting time of $j$.
    Let $f_j$ be the number of jobs $j' \in J_p$ with $C_{j'}'\leq S_{j_0}$.
    For each of those jobs, the arrival time of the resource needed to start $j_0$ will be changed by $\Delta f_j$.
    Thus, the starting time of job $j_0$ in the changed $\spt$ schedule is at least $S_{j_0}+ \Delta f_j$.
    Now let $g_j$ be the number of jobs $j' \in J_p$ that are processed in the time interval $[S_{j_0}, C_j)$ before the change.
    For each of those jobs, the processing time is changed by $\Delta$ and the job is started at or after the new starting time of job $j_0$.
    Thus the new completion time of $j$ is at least $C_j+ \Delta f_j + \Delta g_j$.
    Consequently, $\spv$ will increase by at least $\sum_{j \in J}(f_j+g_j)\Delta$ if $\Delta>0$, and decrease by at most $\sum_{j \in J}(f_j+g_j)|\Delta|$ if $\Delta<0$.
    
    If $\frac{\sum_{j \in J}(f_j+g_j)}{\sum_{j \in J_p}h_j}\geq \frac{\spv}{\opv}$, then increasing the processing times in $J_p$ to $p_+$ will not decrease $\frac{\spv}{\opv}$.
    Otherwise, decreasing the processing times in $J_p$ to $p_-$ will not decrease $\frac{\spv}{\opv}$.
    Each time we apply this operation, the number of distinct processing times decreases by $1$. Finally, we get an instance where the only processing times are $p_{\min}=0$ and $p_{\max}=1$.
  \end{proof}

  Lastly, we modify the order of the jobs in the optimal solution.
  If $\opt$ and $\spt$ process a job of length $0$ at the same time, then we can remove the job from the instance and reduce the number of resources that arrive at this time by $1$.
  This will reduce $\opv$ and $\spv$ by the same amount. 
  
  Let $t$ be the time at which schedule $\spt$ first starts to process a job of length $1$. 
  On one hand, $\opt$ does not process jobs of length $0$ before $t$ by the above argument. On the other hand, 
  there is no idle time after $t$ in $\spt$, because that would mean idle time in $\opt$.
  Thus, if we move all jobs of length $0$ and their corresponding resource arrivals in $\opt$ to time $t$, then $\spv$ does not change but $\opv$ decreases. We may thus assume that schedule $\opt$ processes every job of length $0$ at $t$.
  
 \begin{figure}[h!]
\centering
\includegraphics[width=0.8\linewidth]{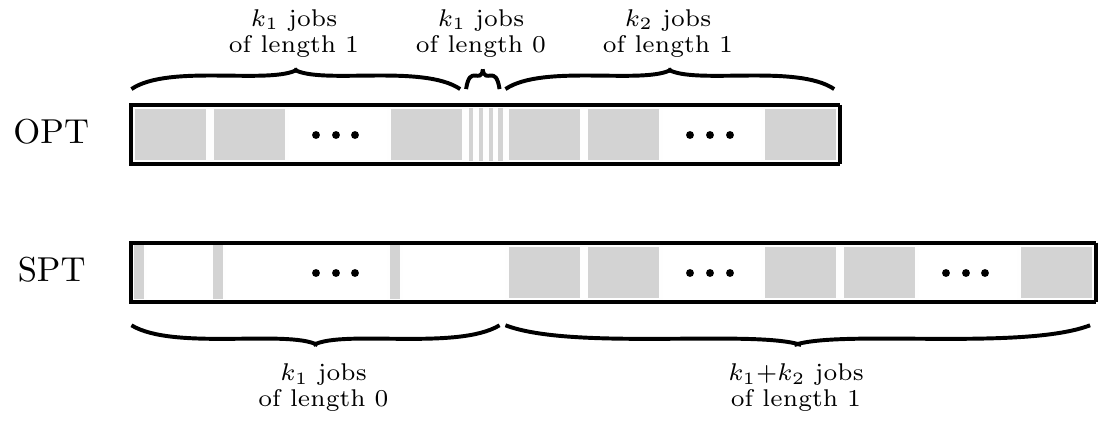}
\caption{Schedules $\opt$ and $\spt$ after the reductions. The $0$ length jobs are scheduled in $\opt$ to the first resource arrival time when multiple resources arrive.}
\label{fig:worstcase}
\end{figure}
  
  We conclude that $\opt$ first processes $k_1$ jobs of length $1$, then $k_1$ jobs of length $0$ and then $k_2$ jobs of length $1$, while $\spt$ starts with the jobs of length $0$ having a lot of idle time in the beginning and then consecutively processes all jobs of length $1$ (see Figure~\ref{fig:worstcase}).
  The weighted sums of completion times are then given by
  \begin{equation*}
    \opv=\frac{k_1(k_1+1)}{2}+k_1^2+k_2k_1+\frac{k_2(k_2+1)}{2}
  \end{equation*}
  and
  \begin{equation*}
    \spv=\frac{k_1(k_1-1)}{2}+k_2k_1+\frac{k_1(k_1+1)}{2}+(k_1+k_2)k_1+\frac{k_2(k_2+1)}{2}  .
  \end{equation*}
  We get
  \begin{equation*}
    \frac{3}{2}\opv -\spv=\frac{k_1^2}{4}+\frac{k_2^2}{4}-\frac{k_1k_2}{2}+\frac{3k_1+k_2}{4} \geq \frac{(k_1-k_2)^2}{4}\geq 0  ,  
  \end{equation*}
showing that the approximation factor is at most $\frac{3}{2}$. 

Setting $k_2=k_1$ and letting $k_1$ go to infinity gives us a sequence of instances such that $\frac{\spv}{\opv}$ converges to $\frac{3}{2}$ as we have $\spv=\frac{9}{2}k_1^2+\mathcal{O}(k_1)$ and $\opv=3k_1^2+\mathcal{O}(k_1)$.
This concludes the proof of Theorem~\ref{thm:32}.
\end{proof}

\section{The problem \texorpdfstring{$1|rm=1,p_j=0|\sum C_jw_j$}{1|rm=1,pj=0|sum Cjwj}} \label{sec:2}

In this section we consider problem $1|rm=1,p_j=0|\sum C_jw_j$, another special case of $1|rm=1|\sum C_jw_j$.
The  problem clearly is NP-hard even for $q=2$ as the knapsack problem can be reduced to it. Indeed, maximizing the weight of the items in the knapsack is equivalent to the task of maximizing the weight of jobs that are scheduled at the first resource arrival time.
Recall that Kis~\cite{kis2015approximability} gave a FPTAS for $1|rm=1|\sum C_jw_j$ when there are two resource arrival times.

First we give a $6$-approximation for the problem based on a greedy approach. 
Next, we give a more complicated $4$-approximation that illustrates one of the important ideas of the more 
general PTAS.
Then we provide a PTAS for the case when $q$, the number of resource arrival times is a constant. This algorithm will be used as a subroutine in the PTAS for the general case.
Finally, we prove the main result of the paper which is a PTAS for the case of an arbitrary number of resource arrival times. 

Since the processing times are $0$, every job is processed at one of the arrival times in any optimal schedule.
Thus a schedule can be represented by a mapping $\pi: J \rightarrow [q]$, where $\pi(j)$ denotes the index of the resource arrival time when job $j$ is processed.
A schedule is feasible if the resource requirements are met, that is, if
\begin{equation}\label{eq:1}
\sum_{j: \pi(j)\leq k} a_j \leq \sum_{i \leq k} b_i
\end{equation}
for all $1\leq k\leq q$.
As we assume that $\sum_i b_i = \sum_j a_j$ holds, this is equivalent to
\begin{equation}\label{eq:2}
\sum_{j: \pi(j)\geq k} a_j \geq \sum_{i \geq k} b_i
\end{equation}
for all $1\leq k\leq q$.

Define $B_k=\sum_{i \geq k} b_i$, and consider the set of jobs that are not processed before a given time point $t_i$.
Then \eqref{eq:2} says that if the resource requirements of these jobs add up to at least $B_i$, then our schedule is feasible.
We will mostly use the second characterization of feasibility, as our algorithms assign the jobs to later time points first. 
The intuition is that the total weight $W_i$ of jobs that are not processed at a time point $t_i$ gives a lower bound for their contribution to the objective function. Therefore it is better to approximate $W_i$, rather than the weight of the jobs already processed, given by $\left(\sum_{j\in J}w_j\right)- W_i$.

\subsection{A greedy \texorpdfstring{$6$}{6}-approximation for arbitrary \texorpdfstring{$q$}{q}} \label{sec:greedy}

The idea of our first algorithm is to have a balance between adding jobs that have small weights and jobs that have high resource requirements.
More precisely, we will assign jobs to the time points going back in time. When we add a job to the set of jobs scheduled after a given time point, we will choose the most inefficient job, i.e. the job minimizing $w_j/a_j$ among all jobs that have weight at most the weight $W$ of all jobs that have already been chosen up to this point.
If there is no job with weight at most $W$, then we simply choose a job with minimal weight.
Intuitively, this rule guarantees that the jobs we choose are not too efficient but their total weight is not too large either.

\begin{algorithm}[h!]
  \caption{Greedy algorithm for $1|rm=1,p_j=0|\sum C_jw_j$.}\label{alg:greedy}
  \begin{algorithmic}[1]
    \Statex \textbf{Input:} Jobs $\jobs$ with $|\jobs|=n$, resource requirements $a_j$, weights $w_j$, resource arrival times $t_1\le  \ldots \le t_q$ and resource quantities $b_1, \dots b_q$.
    \Statex \textbf{Output:} A feasible schedule $\pi$. 
    \State Set $A=0$.
    \State Set $W=0$.
	\For{$i$ from $0$ to $q-1$}
		\While{$A< B_{q-i}$} 
			\If{there is an unassigned job $j$ with $w_j\leq W$}
				\State \begin{varwidth}[t]{0.85\linewidth}
				Let $j$ be an unassigned job with $w_j\leq W$ minimizing $w_j/a_j$.
				\end{varwidth}
			\Else
    			\State Let $j$ be an unassigned job minimizing $w_j$.
			\EndIf
			\State $W\leftarrow W+w_j$
			\State $A\leftarrow A+a_j$
			\State Set $\pi(j)=q-i$.
		\EndWhile
	\EndFor
    \State \textbf{return} $\pi $
  \end{algorithmic}
\end{algorithm}

\thmgreedy*
\begin{proof}
We claim that Algorithm~\ref{alg:greedy} satisfies the requirements of the theorem. 
To prove this, we show that at the end of each iteration of the \emph{for} loop, $W$ is a 6-approximation for the problem of finding a minimum weight job set $S$ consuming at least $B_i$ resources. 
Let $\opv$ be the optimum value of this problem.
Since the $B_i$'s are monotone decreasing and the algorithm does not consider the $B_i$'s when picking the jobs, it is enough to show that the statement holds for the case $q=2,i=0$.

Consider the last iteration of the \emph{while} loop when $W<\opv$ holds at the beginning of the loop.
Since there exists a set of jobs of total weight $\opv$, the value of $W$ at the end of this iteration is bounded by $2\cdot\opv$.
After this iteration, our algorithm will always pick jobs that are at least as inefficient as any of the jobs picked by the optimum solution.
Consequently, the \emph{while} loop will end at the latest when the weight of the jobs scheduled after this iteration is at least $\opv$.

Now consider the last iteration of the \emph{while} loop. At the beginning of the iteration, we have $A<B_{q-i}$ and $W$ is bounded by $3\cdot\opv$ by the above.
Since there exists a set of jobs of total weight $\opv$, $W$ is at most doubled during the iteration.
This means that $W$ is bounded by $6\cdot\opv$ at the end of the \emph{while} loop, as stated.

The running time follows by ordering the jobs according to their weight and by using AVL trees for picking $j$ in the \emph{while} loop.
\end{proof}

The following example shows that the bound is tight. We have $5$ jobs with weights $w_1=w_2=1-2\varepsilon$, $w_3=1-\varepsilon$, $w_4=1$ and $w_5=3$. The resource requirements are $a_1=a_2=\varepsilon/5$, $a_3=1-\varepsilon/2$, $a_4=1$ and $a_5=4$. The resource arrival times are $t_1=0$ and $t_2=1$, with resource quantities $b_1=5-\varepsilon/10$ and $b_2=1$.
Here the optimum solution is to schedule the job with weight $1$ to time point $t_2$ and all the remaining jobs to time point $t_1$. However, our algorithm will schedule the job with weight $1$ to time point $t_1$ and all the remaining jobs to $t_2$.

\subsection{A \texorpdfstring{$(4+\varepsilon)$}{4+epsilon}-approximation for arbitrary \texorpdfstring{$q$}{q}} \label{sec:4+eps}

Now we give a slightly better approximation for the problem. The algorithm is slightly more complicated than the one presented in Section~\ref{sec:greedy}, but the proof illustrates one of the important ideas of the general PTAS.

\thmfourpluseps*
\begin{proof}
The idea of the algorithm is as follows. We may assume without loss of generality that resource arrival times are integer.
First we shift all resource arrival times to powers of $2$.
For each time point $t=t_i$ in the shifted instance, we apply the FPTAS by Kis~\cite{kis2015approximability} to the instance which has only two resource arrival times $t_1$ and $t$, and the resource quantity for $t$ is $B_i$.
Denote the set of jobs assigned to $t$ this way by $S_i$.
Then, going back from the last time point $t_q$ to the first one $t_1$, we assign all unassigned jobs from $S_i$ to $t_i$, i.e. $\pi(j)=\max\{ i :\  j\in S_i \}$.

More formally, let $\mathcal{I}$ be an instance of $1|rm=1, p_j=0 |\sum_j C_jw_j $. We assume $t_1=0$ and $t_2=1$.
We define a new instance $\mathcal{I}'$ of $1|rm=1, p_j=0 |\sum_j C_jw_j $ with shifted resource arrival times as follows. 
Set
\begin{equation*}
    t'_i=
    \begin{cases}
    0 &\quad \text{if $i=1$},\\
    2^{i-2} &\quad \text{for $i=2,\dots,\lceil \log_2(t_q) \rceil +2$},
    \end{cases}
\end{equation*}
and
\begin{equation*}
    b'_i=
    \begin{cases}
    b_i & \quad\text{if $i=1,2$},\\
    \sum[b_i:\ t_i\in (2^{i-3}, 2^{i-2}]\ ] &\quad \text{for $i=3,\dots,\lceil \log_2(t_q) \rceil +2$}.
    \end{cases}
\end{equation*}

\begin{claim}\label{claim:shift}
A solution to $\mathcal{I}$ with weighted sum of completion times $W$ can be transformed into a solution of $\mathcal{I}'$ with weighted sum of completion times at most $2W$.
Furthermore, any feasible schedule for $\mathcal{I}'$ also is a feasible schedule for $\mathcal{I}$.
\end{claim}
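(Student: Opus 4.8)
The plan is to route everything through a single rounding map. Since $p_j=0$, a schedule just assigns each job to an arrival time, the completion time of a job equals that time, and feasibility is exactly the cumulative condition \eqref{eq:1}--\eqref{eq:2}. Let $\phi(t)$ be the smallest power of $2$ that is at least $t$, with $\phi(0)=0$. By construction the arrival times of $\mathcal I'$ are precisely the $\phi$-images of the arrival times of $\mathcal I$, and $b'_k=\sum\{\,b_i:\ \phi(t_i)=t'_k\,\}$. The three facts I will use about $\phi$ are that it is non-decreasing, that $t\le\phi(t)$, and that $\phi(t)\le 2t$.

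For the first statement I would take a feasible schedule $\pi$ for $\mathcal I$ of value $W$ and move every job $j$ from its slot $t_{\pi(j)}$ to the slot $\phi(t_{\pi(j)})$ of $\mathcal I'$. The completion time of $j$ then becomes $\phi(t_{\pi(j)})\le 2t_{\pi(j)}$, so the new objective value is at most $2W$. For feasibility I would first record the following extension of \eqref{eq:2}: for every up-set $S\subseteq\Rset_{\ge 0}$ (i.e.\ $t\in S$ and $t'\ge t$ imply $t'\in S$) one has $\sum_{j:\ t_{\pi(j)}\in S}a_j\ge\sum_{i:\ t_i\in S}b_i$; this is immediate from \eqref{eq:2} because $S$ meets $\{t_1,\dots,t_q\}$ in a suffix $\{t_m,\dots,t_q\}$. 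Now fix an arrival time $t'_k$ of $\mathcal I'$ and apply this with $S=\{t:\ \phi(t)\ge t'_k\}$, which is an up-set since $\phi$ is monotone. As both the jobs of $\pi$ and the resources of $\mathcal I$ sit on $\{t_1,\dots,t_q\}$, the left-hand side equals the total requirement of jobs placed by the new schedule at slots $\ge t'_k$, while the right-hand side equals $\sum\{\,b_i:\ \phi(t_i)\ge t'_k\,\}=\sum_{m\ge k}b'_m=B'_k$; this is exactly \eqref{eq:2} for the new schedule in $\mathcal I'$.

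For the second statement I would observe that $\mathcal I$ has, at every time $\tau$, at least as much cumulative resource as $\mathcal I'$: indeed $\sum_{k:\ t'_k\le\tau}b'_k=\sum\{\,b_i:\ \phi(t_i)\le\tau\,\}\le\sum\{\,b_i:\ t_i\le\tau\,\}$ because $t_i\le\phi(t_i)$. Hence a schedule that is feasible for $\mathcal I'$ --- meaning that at each job's time the cumulative requirement is within the cumulative $b'$ --- satisfies the same inequality with $b'$ replaced by the (weakly larger) cumulative $b$, so it is feasible for $\mathcal I$ as well, with unchanged objective value.

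The only step needing real care is the bookkeeping that identifies $\sum_{m\ge k}b'_m$ with $\sum\{\,b_i:\ \phi(t_i)\ge t'_k\,\}$ and recognizes $\{t:\ \phi(t)\ge t'_k\}$ as an up-set on which \eqref{eq:2} can legitimately be invoked (one must check that jobs and resources alike live only on the original grid $\{t_1,\dots,t_q\}$). Everything else, including the factor $2$, is just $t\le\phi(t)\le 2t$.
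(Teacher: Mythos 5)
Your argument is correct and is essentially the paper's own proof: the map $\phi$ is exactly the paper's $s_i=\min\{t_i':\,t_i\le t_i'\}$, the factor $2$ comes from $t\le\phi(t)\le 2t$ on integer arrival times, and the reverse direction uses the same cumulative-resource comparison. The only difference is that you spell out, via the up-set version of \eqref{eq:2}, the feasibility of the rounded schedule, which the paper asserts without detail; this is a welcome but not essentially different elaboration.
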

\begin{proof}
Let us define $s_i=\min\{t_i':\ t_i \leq t_i'\}$ for $i=1,\dots,q$.
Let $\pi$ be the solution for $\mathcal{I}$.
Then assigning all jobs that are assigned to time point $t_i$ to $s_i$ gives us a feasible solution to $\mathcal{I}'$.
By this change, the completion of any job is at most doubled (recall that each $t_i$ is assumed to be integer). 

Since the available amount of resources at each time in $\mathcal{I}'$ is at most as much as in $\mathcal{I}$, a feasible schedule for $\mathcal{I}'$ is also a feasible schedule for $\mathcal{I}$.
\end{proof}

\begin{claim}
There exists a polynomial time $(2+\varepsilon)$-approximation algorithm for constant $\varepsilon$ for all instances $\mathcal{I}$ where the resource arrival times are integer powers of $2$.
\end{claim}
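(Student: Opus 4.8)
The plan is to analyze precisely the algorithm sketched just before the claim. For every arrival time $t_i=2^{i-2}$ of the power-of-two instance $\mathcal I$ (recall $t_1=0$), I would invoke Kis's FPTAS on the auxiliary two-arrival-time instance with time points $t_1$ and $t_i$ and demand $B_i$ at $t_i$, obtaining a job set $S_i$ with $\sum_{j\in S_i}a_j\ge B_i$ and $\sum_{j\in S_i}w_j\le (1+\varepsilon')\,\opv_i$, where $\opv_i:=\min\{\sum_{j\in S}w_j:\ \sum_{j\in S}a_j\ge B_i\}$. This is legitimate because with $p_j=0$ and two time points $0$ and $t_i$, a schedule is feasible iff the jobs placed at $t_i$ consume at least $B_i$ units, and its objective equals $t_i\cdot\sum_{j\text{ at }t_i}w_j$, so Kis's FPTAS returns a $(1+\varepsilon')$-approximately minimum-weight demand-$B_i$ cover. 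One then sets $\pi(j)=\max\{i:\ j\in S_i\}$; this is well defined since $B_1=\sum_j a_j$ forces $S_1=J$. The number of distinct powers of two below $t_q$ is $O(\log t_q)$, which is polynomial in the binary encoding of the input, so only polynomially many FPTAS calls are made and the running time is polynomial for constant $\varepsilon$.

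Feasibility of $\pi$ is immediate: for each $k$ the set of jobs not processed before $t_k$ contains $S_k$ (every $j\in S_k$ has $\pi(j)\ge k$), hence $\sum_{j:\pi(j)\ge k}a_j\ge\sum_{j\in S_k}a_j\ge B_k$, which is exactly \eqref{eq:2}.

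For the ratio, the one idea I would highlight — the same one that later drives the PTAS — is that the maximum of a set of distinct powers of two is within a factor $2$ of their sum: $\sum_{i=2}^{m}2^{i-2}=2^{m-1}-1<2\cdot 2^{m-2}=2t_m$, and $t_1=0$ adds nothing, so $\sum_{i\le m}t_i\le 2t_m$ for all $m$. Since $\pi(j)$ is one of the indices $i$ with $j\in S_i$ and all $t_i\ge 0$,
\[
\sum_{j}w_j\,t_{\pi(j)}\ \le\ \sum_{j}w_j\!\!\sum_{i:\,j\in S_i}\!\! t_i\ =\ \sum_i t_i\sum_{j\in S_i}w_j\ \le\ (1+\varepsilon')\sum_i t_i\,\opv_i .
\]
Now let $\pi^*$ be an optimal schedule for $\mathcal I$ with value $\mathrm{OPT}$. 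Feasibility of $\pi^*$ gives $\sum_{j:\pi^*(j)\ge i}a_j\ge B_i$, hence $\opv_i\le\sum_{j:\pi^*(j)\ge i}w_j$. Substituting and swapping the order of summation,
\[
\sum_i t_i\,\opv_i\ \le\ \sum_i t_i\!\!\sum_{j:\,\pi^*(j)\ge i}\!\! w_j\ =\ \sum_j w_j\!\!\sum_{i\le \pi^*(j)}\!\! t_i\ \le\ 2\sum_j w_j\,t_{\pi^*(j)}\ =\ 2\,\mathrm{OPT},
\]
the last step by the geometric-series bound. Combining the two displays gives $\sum_j w_j t_{\pi(j)}\le 2(1+\varepsilon')\mathrm{OPT}$, and taking $\varepsilon'=\varepsilon/2$ yields the desired $(2+\varepsilon)$-approximation.

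The main obstacle is not any single calculation but making the reduction to Kis's FPTAS airtight: one must check that the auxiliary two-arrival-time instance with demand $B_i$ at $t_i$ (and initial stock $\sum_j a_j-B_i$) is feasible, that minimizing $\sum_j C_jw_j$ there is genuinely the min-weight covering problem with objective $\opv_i$, and that the FPTAS's guarantee transfers without loss. A secondary, routine point is confirming that the number of power-of-two arrival times is polynomial in the input length, so polynomially many invocations suffice.
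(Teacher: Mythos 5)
Your proposal is correct and follows essentially the same route as the paper: invoke Kis's FPTAS once per (power-of-two) arrival time to get a near-minimum-weight set $S_i$ covering $B_i$, assign $\pi(j)=\max\{i:\,j\in S_i\}$, and charge the algorithm's cost against the optimum via the geometric-series bound $\sum_{i\le m}t_i\le 2t_m$, losing a factor $2(1+\varepsilon')$. The paper's displayed chain of inequalities is exactly your double-counting/index-swap argument written in a slightly different order, so the two proofs coincide in substance.
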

\begin{proof}
We use the procedure that we described, i.e. we solve the instance for each of the time points using an $ \alpha$-approximation provided by \cite{kis2015approximability}, where $\alpha=1+\varepsilon$.
Let $\pi^\opt$ be an optimum solution and let $J^\opt_k$ be the set of jobs $j$ with $\pi^\opt(j)= k$.
We have 
\[ w(S_i)\leq \alpha \sum_{k=i}^q w(J^\opt_k)    \]
for $i=1,\dots,q$. Then we get
\begin{align*}
2\alpha \sum_{j\in J} w_j C_j^{\pi^\opt}
{}&{}=
\sum_{i=2}^q (2\alpha)\cdot 2^{i-2} w(J^\opt_i)\\
{}&{}\geq
\sum_{i=2}^q \alpha\cdot 2^{i-2} w(J^\opt_i)+\alpha \sum_{j=1}^\infty 2^{-j} 2^{i-2}w(J^\opt_i)\\
{}&{}\geq
\sum_{i=2}^q \alpha 2^{i-2}\sum_{k=i}^q w(J^\opt_k)\\
{}&{} \geq
\sum_{i=2}^q 2^{i-2} w(S_i)  ,
\end{align*}
thus the approximation ratio follows.
\end{proof}

The theorem follows from the two claims.
\end{proof}

\subsection{PTAS for constant \texorpdfstring{$q$}{q}} \label{sec:constq}

The aim of this section is to give a PTAS for the case when the number of resource arrival times is a constant.
The algorithm is a generalization of a well known PTAS for the knapsack problem, and will be used later as a subroutine in the PTAS for an arbitrary number of resource arrival times.
The idea is to choose a number $k\in\mathbb{Z}_+$, guess the $k$ heaviest jobs that are processed at each resource arrival time $t_i$, and then determine the remaining jobs that are scheduled at $t_i$ in a greedy manner.
Since we go over all possible sets containing at most $k$ jobs for each resource arrival time, there is an exponential dependence on the number $q$ of resource arrival times in the running time.

\begin{algorithm}[h!]
  \caption{PTAS for $1|rm=1,p_j=0|\sum C_jw_j$ when $q$ is a constant.}\label{alg:PTAS'}
  \begin{algorithmic}[1]
    \Statex \textbf{Input:} Jobs $\jobs$ with $|\jobs|=n$, resource requirements $a_j$, weights $w_j$, resource arrival times $t_1\le \ldots \le t_q$ and resource quantities $b_1, \dots b_q$.
    \Statex \textbf{Output:} A feasible schedule $\pi$. 
    \ForAll{subpartitions $S_1 \cup \dots \cup S_q \subseteq J $ with $|S_i|\leq k$ for $i>1$}  \label{st:1}
        \State Set $A=0$. \label{st:2}
        \State Set $W=0$. \label{st:3}
        \For{$i$ from $0$ to $q-2$} \label{st:4}
            \For{$j\in S_{q-i}$} \label{st:5}
                \State $\pi(j)=q-i$ \label{st:6}
                \State $A\leftarrow A+a_j$ \label{st:7}
            \EndFor
            \If{$|S_{q-i}|=k$} \label{st:8}
            \State $W\leftarrow \max\{W,\min\{w_j:\ j \in S_{q-i}\}\}$ \label{st:9}
    			\While{$A<B_{q-i}$} \label{st:10}
    				\If{there exists an unassigned job $j$ with $w_j\leq W$} \label{st:11}
    				    \State Let $j$ be an unassigned job with $w_j\leq W$ minimizing $w_j/a_j$. \label{st:12}
    				\State $\pi(j)=q-i$ \label{st:14}
    				\State $A\leftarrow A+a_j$ \label{st:15}
    				\Else
    				    \State \textbf{break} \label{st:13}
    				\EndIf
    			\EndWhile
            \EndIf 
        \EndFor
        \State For all remaining jobs set $\pi(j)=1$. \label{st:16}
    \EndFor
    \State Let $\pi$ be the best schedule found. \label{st:17}
    \State \textbf{return} $\pi$ \label{st:18}
  \end{algorithmic}
\end{algorithm}

\thmpreptas*
\begin{proof}
We claim that Algorithm~\ref{alg:PTAS'} satisfies the requirements of the theorem. Let $\pi^\opt$ be an optimal schedule and define $J^\opt_i=\{ j \in J:\ \pi^\opt(j)=i \}$.
Let $S^\opt_i$ be the set of the $k$ heaviest jobs in $J^\opt_i$ if $|J^\opt_i|\geq k$, otherwise let $S^\opt_i=J^\opt_i$.
Let $J_i=\{ j \in J:\ \pi(j)=i \}$ denote the set of jobs assigned to time $t_i$ in our solution.
In each iteration of the \emph{for} loop of Step~\ref{st:4}, let $j_i$ be the last job added to $J_i$ if such a job exists.

Assume that we are at the iteration of the algorithm when the subpartition $S^\opt_1\cup\dots\cup S^\opt_q$ is considered in Step~\ref{st:1}.
Let $W_{q-\ell}$ denote the value of $W$ at the end of the iteration of the for loop corresponding to $i=\ell$ in Step~\ref{st:4}.
By Steps~\ref{st:3} and \ref{st:9}, we have
\begin{equation*}
    W_{q-\ell}\leq \frac{1}{k}\sum_{i=\ell}^q\sum_{j \in J^\opt_{i}}w_j  .
\end{equation*}
As our algorithm always picks the most inefficient job, we also have 
\begin{equation*}
    \sum_{i=\ell}^q \sum_{j \in J_i\setminus \{j_i\}}w_j  \leq \sum_{i=\ell}^q \sum_{j\in J^\opt_{i}}w_j  ,
\end{equation*}
where $J_i\setminus\{j_i\}=J_i$ if $j_i$ is not defined for $i$. 

Combining these two observations, for $\ell=1,\dots,q$ we get
\begin{align*}
\sum_{i=\ell}^q \sum_{j \in J_i}w_j 
{}&{}=
\sum_{i=\ell}^q \sum_{j \in J_i\setminus \{j_i\}}w_j +\sum_{i=\ell}^q w_{j_i} \\
{}&{}\leq 
\sum_{i=\ell}^q \sum_{j\in J^\opt_i}w_j + (q-\ell+1)\cdot W_{\ell}\\
{}&{}\leq 
(1+\frac{q}{k})\sum_{i=\ell}^q \sum_{j\in J^\opt_{i}}w_j  ,
\end{align*}
where the first inequality follows from the fact that $w_{j_i}\leq W_i\leq W_\ell$ whenever $i\geq \ell$.
This proves that the schedule that we get is a $(1+\frac{q}{k})$-approximation.

We get a factor of $n^{qk}$ in the running time for guessing the sets $S_k$.
Assigning the remaining jobs can be done in linear time by ordering the jobs and using AVL-trees, thus we get an additional factor of $n$. In order to get a PTAS, we set $k=\frac{\varepsilon}{q}$, concluding the proof of the theorem.
\end{proof}

\subsection{PTAS for arbitrary \texorpdfstring{$q$}{q}}
\label{sec:ptas}

We turn to the proof of the main result of the paper. As in Section \ref{sec:4+eps}, we shift resource arrival times; here we use powers of $1+\varepsilon$, for a suitably small $\varepsilon$.

Let $\mathcal{I}$ be an instance of $1|rm=1, p_j=0 |\sum_j C_jw_j $. We assume that resource arrival times are integer, and that $t_1=0$, $t_2=1$.
We define a new instance $\mathcal{I}'$ of $1|rm=1, p_j=0 |\sum_j C_jw_j $ with shifted resource arrival times as follows. 
Set
\begin{equation*}
    t'_i=
    \begin{cases}
    0 &\quad \text{if $i=1$},\\
    (1+\varepsilon)^{i-2} &\quad \text{for $i=2,\dots,\lceil \log_{1+\varepsilon}(t_q) \rceil +2$},
    \end{cases}
\end{equation*}
and
\begin{equation*}
    b'_i=
    \begin{cases}
    b_i & \quad\text{if $i=1,2$},\\
    \sum[b_i:\ t_i\in ((1+\varepsilon)^{i-3}, (1+\varepsilon)^{i-2}]\ ] &\quad \text{for $i=3,\dots,\lceil \log_{1+\varepsilon}(t_q) \rceil +2$}.
    \end{cases}
\end{equation*}

The proof of the following claim is the same as that of Claim \ref{claim:shift}.

\begin{claim}
A solution to $\mathcal{I}$ with weighted sum of completion times $W$ can be transformed into a solution of $\mathcal{I}'$ with weighted sum of completion times at most $(1+\varepsilon)W$.
Furthermore, any feasible schedule for $\mathcal{I}'$ also is a feasible schedule for $\mathcal{I}$. \qed
\end{claim}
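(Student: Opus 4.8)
The plan is to repeat the proof of Claim~\ref{claim:shift} almost word for word, with the base $2$ replaced by $1+\varepsilon$. For the first assertion, take a feasible schedule $\pi$ for $\mathcal{I}$ and set $s_i=\min\{t'_j:\ t'_j\ge t_i\}$ for $i=1,\dots,q$; this is well defined because $t_i\le t_q$ forces $\lceil\log_{1+\varepsilon}t_i\rceil\le\lceil\log_{1+\varepsilon}t_q\rceil$. Reassign every job processed by $\pi$ at time $t_i$ to time $s_i$. Since the processing times are $0$, a job's completion time is exactly the arrival time it is assigned to, so it suffices to note that $s_i\le(1+\varepsilon)\,t_i$ for every $i$: indeed $s_1=t_1=0$ and $s_2=t_2=1$, and for $t_i\ge 1$, writing $j=\lceil\log_{1+\varepsilon}t_i\rceil$, we have $t_i>(1+\varepsilon)^{j-1}$, hence $s_i=(1+\varepsilon)^{j}\le(1+\varepsilon)\,t_i$. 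Thus the new schedule has objective value at most $(1+\varepsilon)W$, exactly as in Claim~\ref{claim:shift}.

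It remains to verify that the reassigned schedule is feasible for $\mathcal{I}'$ and, conversely, that every feasible schedule of $\mathcal{I}'$ is feasible for $\mathcal{I}$. Both follow from the single observation that the cumulative resource-arrival function of $\mathcal{I}'$ is pointwise dominated by that of $\mathcal{I}$: by the definition of the $b'_i$, each amount $b_i$ arriving in $\mathcal{I}$ at time $t_i$ is relocated in $\mathcal{I}'$ to the later-or-equal time $s_i$, so $\sum_{i:\ t'_i\le t}b'_i\le\sum_{i:\ t_i\le t}b_i$ for every $t$, with equality once $t$ is at least the last arrival time (since $\sum_i b'_i=\sum_i b_i=\sum_j a_j$, which is what makes \eqref{eq:1} and \eqref{eq:2} equivalent for $\mathcal{I}'$ as well). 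For the forward direction I would check \eqref{eq:2} for $\mathcal{I}'$ just as in Claim~\ref{claim:shift}: for $k\ge 2$ (the case $k=1$ being trivial), the jobs that the reassigned schedule places at a time $\ge t'_k$ are exactly those $\pi$ places strictly after $t'_{k-1}$, and by feasibility of $\pi$ their resource requirements already total at least $\sum[\,b_i:\ t_i>t'_{k-1}\,]=B'_k$. For the converse, a schedule satisfying \eqref{eq:1} for $\mathcal{I}'$ satisfies it a fortiori for $\mathcal{I}$: read the schedule as processing each job at the corresponding time point $t'_i$ and apply the domination above.

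The only point that is not an immediate copy, and the one I would be most careful with, is this last bookkeeping step: $\mathcal{I}$ and $\mathcal{I}'$ have different sets of arrival times, so the phrase ``a feasible schedule of $\mathcal{I}'$ is a feasible schedule of $\mathcal{I}$'' has to be read through the identification of arrival index $i$ in $\mathcal{I}'$ with the time point $t'_i$, after which feasibility transfers precisely because $\mathcal{I}$ has at least as much resource available by every time point. Everything else literally repeats the proof of Claim~\ref{claim:shift}, so in the write-up I would just say ``the proof is identical'' and record only the domination inequality $\sum_{i:\ t'_i\le t}b'_i\le\sum_{i:\ t_i\le t}b_i$ and the estimate $s_i\le(1+\varepsilon)t_i$.
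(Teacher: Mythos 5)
Your proposal is correct and is essentially the paper's own argument: the paper proves this claim by stating that it is identical to the proof of Claim~\ref{claim:shift}, with the base $2$ replaced by $1+\varepsilon$, which is exactly what you carry out (the key points being $s_i\le(1+\varepsilon)t_i$ for integer $t_i\ge 1$ and the pointwise domination of the cumulative resource arrivals of $\mathcal{I}'$ by those of $\mathcal{I}$). Your extra care with the feasibility bookkeeping is a harmless elaboration of what the paper leaves implicit.
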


Due to the claim, we may assume that the positive arrival times are powers of $1+\varepsilon$. For convenience of notation, in this section we will assume that the largest arrival time is 1, and arrival times are indexed in decreasing order, starting with $t_0=1$. That is, $t_i=(1+\varepsilon)^{-i}$ ($i=0,\dots,q-2$), and $t_{q-1}=0$.
We will also assume that for a given constant $r$, $b_{q-r-1}=\dots=b_{q-2}=0$. This can be achieved by adding $r$ dummy arrival times.

\thmptas*
\begin{proof}
Let us fix an even integer $r$ and $\varepsilon>0$; we will later assume that $r$ is very large compared to $\varepsilon^{-1}$.
We assume that resource arrival times are as described above, and are indexed in decreasing order.  

In the algorithm, we fix jobs at progressively decreasing arrival times, by using the PTAS of the previous section for $r+1$ arrival times on different instances except for the first step, when we may use the PTAS for less than $r+1$ arrival times. 
We will run our algorithm $r/2$ times with slight modifications, and pick the best result.
Each run is characterized by a parameter $\ell \in \{1,\dots,r/2\}$.

In the first step, we consider arrival times $t_0,t_1,\dots,t_{r/2+\ell-1},0$.
We move the resources arriving before $t_{r/2+\ell-1}$ to 0, and use the PTAS for $r/2+\ell+1$ arrival times on this instance.
We fix the jobs that are scheduled at arrival times $t_0,t_1,\dots, t_{\ell-1}$.

Consider now the $j$th step for some $j\geq 2$. 
Define $s=(j-2)r/2+\ell$ and consider arrival times $t_s,t_{s+1},\dots,t_{s+r-1},0$.
Move the resources arriving before $t_{s+r-1}$ to 0, and decrease $b_s,b_{s+1},\dots$ in this order as needed, so that the total requirement of unfixed jobs equals the total resource.
Use the PTAS for $r+1$ arrival times on this instance.
Fix the jobs that are scheduled at arrival times $t_s,t_{s+1},\dots, t_{s+r/2-1}$.

The algorithm runs while $s+r-1 \leq q-2$, i.e., $jr/2+\ell \leq q-1$. Since the smallest $r$ arrival times (except for $0$) are dummy arrival times, the algorithm considers all resource arrivals.

The schedule given by the algorithm is clearly feasible, because when jobs at $t_i$ are fixed, the total resource requirement of jobs starting no earlier than $t_i$ is at least the total amount of resource arriving no earlier than $t_i$.
To analyze the approximation ratio, we introduce the following notation:
$W_i$ is the total weight that the algorithm schedules at $t_i$;
$W'_i$ is the weight that the algorithm temporarily schedules at $t_i$ when $i$ is in the interval $[t_{s+r/2},t_{s+r-1}]$ (or, in the first step, in the interval $[t_{\ell},t_{\ell+r/2-1}]$);
$W^*_i$ is the total weight scheduled at $t_i$ in the optimal solution.

Since we use the PTAS for $r/2+\ell+1$ arrival times in the first step, we have
\begin{equation*}
  \sum_{i=0}^{\ell-1}(1+\varepsilon)^{-i}W_i + \sum_{i=\ell}^{\ell+r/2-1}(1+\varepsilon)^{-i}W'_i \leq (1+\varepsilon)\sum_{i=0}^{\ell+r/2-1}(1+\varepsilon)^{-i}W^*_i,
\end{equation*}
as the right-hand side is $(1+\varepsilon)$ times the objective value of the feasible solution obtained from the optimal solution by moving jobs arriving before $t_{\ell+r/2-1}$ to~0.

For $s=jr/2+\ell$, we compare the output of the PTAS with a different feasible solution: we schedule total weight $W'_i$ at $t_i$ for $i=s,s+1,\dots,s+r/2-1$, total weight $W^*_i$ at $t_i$ for $i=s+r/2+1,\dots,s+r-1$, and at $t_{s+r/2}$ we schedule all jobs that are no earlier than $t_{s+r/2}$ in the optimal schedule but are no later than $t_{s+r/2}$ in the PTAS schedule.
We get the inequality
\begin{multline*}
  \sum_{i=jr/2+\ell}^{(j+1)r/2+\ell-1}(1+\varepsilon)^{-i}W_i + \sum_{i=(j+1)r/2+\ell}^{(j+2)r/2+\ell-1}(1+\varepsilon)^{-i}W'_i\\
  \leq (1+\varepsilon)\left(\sum_{i=jr/2+\ell}^{(j+1)r/2+\ell-1}(1+\varepsilon)^{-i}W'_i +\sum_{i=(j+1)r/2+\ell}^{(j+2)r/2+\ell-1}(1+\varepsilon)^{-i}W^*_i \right.\\ \left. + (1+\varepsilon)^{-(j+1)r/2-\ell}\sum_{i=0}^{(j+1)r/2+\ell-1}W^*_i\right)  .
\end{multline*}
The sum of these inequalities gives
\begin{align}
  \sum_{i=0}^{q-2}(1+\varepsilon)^{-i}W_i
  {}&{}\leq
  \varepsilon \sum_{i=\ell}^{q-2}(1+\varepsilon)^{-i}W'_i + (1+\varepsilon)\sum_{i=0}^{q-2}(1+\varepsilon)^{-i}W^*_i\nonumber\\ 
  {}&{}~~~+ (1+\varepsilon)\sum_{i=0}^{q-2}\left(\sum_{j:jr/2+\ell>i}(1-\varepsilon)^{-(jr/2+\ell)}\right)W^*_i  . \label{eq:bound1}
\end{align}
To bound the first term on the right hand side of \eqref{eq:bound1}, first we observe that
\begin{equation*}
  \sum_{i=\ell}^{r/2+\ell-1}(1+\varepsilon)^{-i}W'_i \leq (1+\varepsilon)\sum_{i=0}^{r/2+\ell-1}(1+\varepsilon)^{-i}W^*_i, 
\end{equation*}
because the left side is at most the value of the PTAS in the first step, while the right side is $(1+\varepsilon)$ times the value of a feasible solution.
Similarly,
\begin{eqnarray*}
  &\displaystyle\sum_{i=(j+1)r/2+\ell}^{(j+2)r/2+\ell-1}(1+\varepsilon)^{-i}W'_i \leq&\\ &(1+\varepsilon)\left(\displaystyle\sum_{i=jr/2+\ell}^{(j+2)r/2+\ell-1}(1+\varepsilon)^{-i}W^*_i  +(1+\varepsilon)^{-jr/2-\ell}\displaystyle\sum_{i=0}^{jr/2+\ell-1}W^*_i\right), &
\end{eqnarray*}
because the left side is at most the value of the PTAS in the $(j+1)$-th step, and the right side is $(1+\varepsilon)$ times the value of the following feasible solution: take the optimal solution, move jobs scheduled before $t_{(j+2)r/2+\ell-1}$ to 0, and move jobs scheduled after $t_{jr/2+\ell}$ to $t_{jr/2+\ell}$.
Adding these inequalities, we get
\begin{eqnarray*}
  &\varepsilon \displaystyle\sum_{i=\ell}^{q-2}(1+\varepsilon)^{-i}W'_i \leq&\\ &\varepsilon(1+\varepsilon)\left(2\displaystyle\sum_{i=0}^{q-2}(1+\varepsilon)^{-i}W^*_i + \displaystyle\sum_{i=0}^{q-2}\left(\sum_{j:jr/2+\ell>i}(1+\varepsilon)^{-jr/2-\ell}\right) W^*_i\right)\leq &\\
  &\varepsilon(1+\varepsilon)\left(2\displaystyle\sum_{i=0}^{q-2}(1+\varepsilon)^{-i}W^*_i + \displaystyle\sum_{i=0}^{q-2}\left(\displaystyle\sum_{j=0}^{\infty}(1+\varepsilon)^{-jr/2-1}\right)(1+\varepsilon)^{-i} W^*_i\right)=&\\
  & \varepsilon(1+\varepsilon)\left(2\displaystyle\sum_{i=0}^{q-2}(1+\varepsilon)^{-i}W^*_i + \frac{(1+\varepsilon)^{r/2-1}}{(1+\varepsilon)^{r/2}-1}\displaystyle\sum_{i=0}^{q-2}(1+\varepsilon)^{-i} W^*_i\right)=&\\
  &\varepsilon \left(2(1+\varepsilon)+\frac{(1+\varepsilon)^{r/2}}{(1+\varepsilon)^{r/2}-1}\right)\displaystyle\sum_{i=0}^{q-2}(1+\varepsilon)^{-i} W^*_i   .&
\end{eqnarray*}
The last expression is at most $4\varepsilon$ times the optimum value if $r$ is large enough.

The last term of the right side of \eqref{eq:bound1} is too large to get a bound that proves a PTAS.
However, we can bound the \emph{average} of these terms for different values of $\ell$.
The average is
\begin{eqnarray*}
  &(1+\varepsilon)\frac{2}{r}\displaystyle\sum_{\ell=1}^{r/2}\displaystyle\sum_{i=0}^{q-2}\left(\displaystyle\sum_{j:jr/2+\ell>i}(1-\varepsilon)^{-(jr/2+\ell)}\right)W^*_i \leq &\\
  & (1+\varepsilon)\frac{2}{r} \displaystyle\sum_{i=0}^{q-2}\left(\displaystyle\sum_{j=1}^{\infty}(1+\varepsilon)^{-j} \right) (1-\varepsilon)^{-i} W^*_i= (1+\varepsilon)\frac{2}{r \varepsilon} \displaystyle\sum_{i=0}^{q-2} (1-\varepsilon)^{-i} W^*_i, &
\end{eqnarray*}
which is at most $\varepsilon$ times the optimum if $r$ is large enough.
To summarize we obtained that for large enough $r$, the average objective value of our algorithm for $\ell=1,2,\dots,r/2$ is upper bounded by
\begin{equation*}
  4\varepsilon \sum_{i=0}^{q-2}(1+\varepsilon)^{-i} W^*_i+(1+\varepsilon)\sum_{i=0}^{q-2}(1+\varepsilon)^{-i}W^*_i+\varepsilon \sum_{i=0}^{q-2}(1+\varepsilon)^{-i} W^*_i =(1+6\varepsilon) \sum_{i=0}^{q-2}(1+\varepsilon)^{-i} W^*_i,
\end{equation*}
which is $(1+6\varepsilon)$ times the objective value of the optimal solution.
This proves that the algorithm that chooses the best of the $r/2$ runs is a PTAS.
\end{proof}

\subsection{Undetermined resource arrival times}

A question one might ask is the following:
Given $\alpha>1$ and an instance of $1|rm=1, p_j=0 |\sum_j C_jw_j $, is there a schedule $\pi$ such that the set $S_i=\{ j:\ \pi(j)\geq i \}$ is an $\alpha$-approximation to the problem of finding a minimum weight job set $S\subseteq J$ consuming at least $B_i$ resources for $i=1,\dots,q$? 
The motivation is that such a solution would also give an $\alpha$-approximation for the instance.
The greedy algorithm of Section~\ref{sec:greedy} shows that the answer to the question is yes if $\alpha\geq 6$.

Let us consider now a variant of the problem where the arriving resource quantities are given, but the resource arrival times are not known in advance. 
Then the smallest $\alpha$ for which the answer is `YES' in the above problem is the best approximation ratio we can achieve.

\thmunknown*
\begin{proof}
Our approximation algorithm is based on the following claim.

\begin{claim}\label{cl:unknown1}
There exists a schedule $\pi$ such that for each $i$ the set $S_i=\{ j:\ \pi(j)\geq i \}$ is a $4$-approximation for the problem of finding a minimum weight job set $S\subseteq J$ consuming at least $B_i$ resources.
\end{claim}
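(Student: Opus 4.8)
The plan is to produce the schedule $\pi$ of Claim~\ref{cl:unknown1} by building the nested family $S_q\subseteq S_{q-1}\subseteq\dots\subseteq S_1=J$ directly, processing the indices $i=q,q-1,\dots,1$ in turn (this is the same as deciding, one threshold at a time, which jobs are scheduled no earlier than $t_i$, exactly as in Algorithm~\ref{alg:greedy}). Write $\opv_i$ for the minimum weight of a job set consuming at least $B_i$ resources and fix an optimal such set $O_i$; recall $\opv_q\le\opv_{q-1}\le\dots\le\opv_1=w(J)$, that $S_i$ is a feasible cover for level $i$ precisely when $\sum_{j\in S_i}a_j\ge B_i$, and that $w(S_i)\le w(J)$ holds for free --- so the bound $w(S_i)\le 4\opv_i$ is only at stake for the indices with $\opv_i<w(J)/4$, which form a suffix of $\{1,\dots,q\}$; at the top index $q$ the bound is immediate by taking $S_q:=O_q$. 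Once the claim is in hand the theorem follows: the resulting $\pi$ is feasible by the observation at the start of Section~\ref{sec:2}, and both its objective value $\sum_i c_i\,w(S_i)$ and the optimum $\sum_i c_i\,w(S^\ast_i)$ are the \emph{same} nonnegative combination of the weights of the upper sets $S_i=\{j:\pi(j)\ge i\}$, $S^\ast_i=\{j:\pi^\ast(j)\ge i\}$ (coefficients $c_i\ge 0$ depending only on the $t_i$); since $w(S^\ast_i)\ge\opv_i$ by~\eqref{eq:2} and $w(S_i)\le 4\opv_i$, a $4$-approximation follows, and the extra $\varepsilon$ in the theorem comes only from replacing the exact covers used below by the FPTAS of Kis~\cite{kis2015approximability}.

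For the construction I would refine the greedy of Algorithm~\ref{alg:greedy} in two ways. First, whenever $S$ must be enlarged to reach a new target, add jobs of smallest ratio $w_j/a_j$ first, capped so as never to pick a job much heavier than the weight accumulated so far (and, for the \emph{last} job before the target is met, capped more aggressively): the usual exchange argument against the competitor $O_i\setminus S$ --- an eligible job set of size at least the current gap and weight at most $\opv_i$ --- keeps the weight of such a ``catch-up'' block within a constant factor of $\opv_i$. Second, and this is the ingredient meant to bring the constant down from the $6$ of Algorithm~\ref{alg:greedy} to $4$, we do not enlarge $S$ at every level: we enlarge it only at a sparse set of ``refresh'' levels, chosen so that between two consecutive refreshes $\opv$ changes by at most a fixed factor, and at a refresh we replace $S$ by $S\cup\widehat S$ where $\widehat S$ is a $2$-approximate cover --- the complement of a greedy knapsack packing, or Kis's FPTAS on the two-point instance $(t_1,t_i)$ --- for the \emph{largest} target the current weight budget still allows, letting $\sum_{j\in S}a_j$ overshoot $B_i$ by a wide margin. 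That overshoot is the crux of the scheme: it guarantees that the block of levels following a refresh is already covered (or needs only a single cheap catch-up), so catch-up weight is incurred at few levels and, being charged against the resource it newly covers while the total resource ever added is bounded by $\sum_{j\in J}a_j$, telescopes instead of piling up.

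The delicate part --- and the main obstacle --- is the amortization that pins the constant to exactly $4$ rather than $6$. In the plain greedy there are three sources of loss: the first excursion of the accumulated weight past $\opv_i$ (up to $2\opv_i$), the subsequent efficient filling (another $\opv_i$), and the final overshoot job (which may again almost double the weight), and summing these naively gives $6$. Getting the refresh frequency, the overshoot margin, and the weight cap (especially the cap on the last added job) to interact so that these losses amortize against one another across an epoch --- and verifying that the resulting worst case is precisely $4$, which is tight by the $(4-\varepsilon)$ inapproximability part of Theorem~\ref{thm:unknown} --- is where essentially all of the work lies; feasibility of $\pi$ and the reduction of the scheduling objective to the nested-cover bound are routine.
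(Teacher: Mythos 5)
Your high-level plan --- refresh only at levels where the cover-optimum has moved by a constant factor, and at each refresh add a near-optimal cover for an overshot target so that the intermediate levels come for free --- is the right structural idea and matches the paper's construction. But the proposal stops short of a proof at exactly the point that matters: you yourself write that making the refresh frequency, the overshoot margin, and the weight caps ``interact so that these losses amortize \dots to precisely $4$'' is ``where essentially all of the work lies,'' and that work is not done. Moreover, the mechanism you propose for doing it (ratio-greedy selection with weight caps and catch-up blocks, i.e.\ the machinery of Algorithm~\ref{alg:greedy}) is not what yields the constant $4$; your own loss accounting for that machinery ($2$ for the first excursion, $1$ for the efficient filling, plus the final overshoot) lands at $6$, and no concrete modification in the proposal is shown to bring it down.

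The missing step is a short telescoping argument that bypasses the greedy entirely. Let $J_k$ be an optimal cover for level $k$ (so $w(J_k)$ is non-increasing in $k$) and set $f(i)=\min\{k:\ w(J_k)\le 2w(J_i)\}$. Process $i=q$, then repeatedly $i\leftarrow f(i)-1$, and at each processed $i$ schedule the \emph{entire} set $J_{f(i)}$ (minus what is already scheduled later) at $t_i$. Since $f(i)\le i$, the set $S_i\supseteq J_{f(i)}$ consumes at least $B_{f(i)}\ge B_k$ for every skipped level $f(i)\le k\le i$, so feasibility of all levels is automatic. For the weight, induct along the chain of processed indices: if $i'$ was the previously processed index, then $i<f(i')$ forces $w(J_i)>2w(J_{i'})$, hence
\[
  w(S_i)\ \le\ w\bigl(J_{f(i)}\bigr)+w(S_{i'})\ \le\ 2w(J_i)+4w(J_{i'})\ <\ 2w(J_i)+2w(J_i)\ =\ 4w(J_i),
\]
and $w(S_k)=w(S_i)\le 4w(J_i)\le 4w(J_k)$ for the skipped levels $k$. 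No ratio ordering, no cap on the last added job, and no charging against resource quantities are needed; the factor $4$ is simply $2\cdot\sum_{t\ge 0}2^{-t}$, with the $2$-approximate covers replaced by $(2+\varepsilon)$-approximate ones from the FPTAS when the $J_k$ cannot be computed exactly.
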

\begin{proof}
For $i=1,\dots,q$, let $J_i$ be an optimal solution to the problem of finding a minimum weight job set $S\subseteq J$ consuming at least $B_i$ resources. Define $f(i)= \min\{k:\ w(J_k)\leq 2w(J_i)\}$ for $i=2,\dots,q$ and let us consider the following procedure.

\begin{algorithm}[h!]
  \caption{Subroutine for $(4+\varepsilon)$-approximation to $1|rm=1, p_j=0, \text{$t_i$ unkown}|\sum C_jw_j$.}\label{alg:unknown}
  \begin{algorithmic}[1]
    \Statex \textbf{Input:} Jobs $\jobs$ with $|\jobs|=n$, resource requirements $a_j$, weights $w_j$, resource quantities $b_1, \dots b_q$.
    \Statex \textbf{Output:} A feasible schedule $\pi$. 
    \State Set $i=q$.
    \While{$i \geq 1$} \label{st:while}
        \State Set $S_{i+1}=\{ j:\ \pi(j)> i \}$.
        \State Set $\pi(j)=i$ for $j\in J_{f(i)}\setminus S_{i+1}$.
        \State $i\leftarrow f(i)-1$
    \EndWhile
    \State \textbf{return} $\pi$
  \end{algorithmic}
\end{algorithm}

It is not difficult to see that $\pi$ fulfills the resource requirements.
We prove by induction that $w(S_i)\leq 4w(J_i)$ holds for $i=1,\dots,q$.
As $S_q=J_{f(q)}$, the inequality $w(S_q)\leq 4w(J_q)$ clearly holds.
Assume now that $i\leq q-1$. If no jobs are assigned to $t_i$, then $w(S_i)=w(S_{i+1})\leq 4w(J_{i+1}) \leq 4w(J_i)$.
Otherwise $i=f(i')-1$, where $i'$ is the index considered in the previous iteration of the while loop in Step~\ref{st:while}. Observe that no jobs are assigned to time points between the $i$th and the $i'$th ones. By induction, we get
\begin{equation*}
    w(S_{i})=w(J_{f(i)}\cup S_{i'}) \leq w(J_{f(i)})+w(S_{i'})\leq 2w(J_{i}) + 4w(J_{i'}) \leq 4w(J_{i})  .
\end{equation*}
Here the second inequality holds by induction and by the definition of $f$, while the last inequality follows from the fact that $i<f(i')$ which implies $w(J_{i}) > 2w(J_{i'}) $.
\end{proof}

Algorithm~\ref{alg:unknown} provides a $(4+\varepsilon)$-approximation for $1|rm=1, p_j=0, \text{$t_i$ unkown}|\sum C_jw_j$ which has running time polynomial in the input size and $\frac{1}{\varepsilon}$ as follows.
Using either an FPTAS for the knapsack problem or the FPTAS of Kis~\cite{kis2015approximability}, we determine an approximation the sets $J_i$.
Then we apply Algorithm~\ref{alg:unknown} to schedule the jobs. This concludes the proof of the first part of the theorem.

The following set of instances shows that $\alpha$ is at least 4. We are given $(n-1)m$ jobs denoted by $1, 2, \dots, (n-1)m$ with weights $w_i=n-\frac{i}{m}$ and $a_i=2^{-i}$.
Furthermore, we have $(n-1)m$ resource arrival times that are unknown.
The resource quantities are given by $b_q=2^{-q}$ and $b_i=2^{-i}-B_{i+1}=2^{-i-1}$ for $i<q$.
In order to fulfill the resource requirements, the set of jobs scheduled at or after time $t_i$ has to contain at least one of the jobs $j\leq i$.
Observe that the optimal solution of finding a minimum weight job set $J_i$ consuming at least $B_i$ resources consists of the single job $i$.

Let $j_1$ be the job processed at time $t_1$. Now we create a sequence starting with $j_1$.
Since the jobs that are greater or equal than $j_1$ have total resource requirements less than $B_{j_1-1}$, we have to schedule at least one job $j_2<j_1$ at or after $t_{j_1-1}$.
This argument can be iterated to find a job $j_3 < j_2$ which is scheduled at or after $t_{j_2-1}$, and so on.

\begin{claim} \label{cl:inap}
For any $\beta< 4$, there exist $n$ and $m$ such that for the sequence $j_1,j_2,\dots$ constructed as above, there is some $i$ with
\[ \beta w_{j_i-1}< \sum_{k=1}^{i+1}w_{j_k}. \]
\end{claim}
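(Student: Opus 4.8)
The plan is to argue by contradiction, uniformly over all feasible schedules. Fix a feasible $\pi$ and run the chain construction with seed the maximal-index job $q$ (this is a legal seed, since the jobs of index $\ge q$ together require only $2^{-q}<2^{-(q-1)}=B_{q-1}$, so feasibility of $\pi$ places some job of index $<q$ into $S_{q-1}$, and the inductive step then proceeds exactly as in the construction above); we obtain $q=j_1>j_2>\cdots>j_L=1$. Suppose, toward a contradiction, that $\sum_{k=1}^{i+1}w_{j_k}\le\beta\,w_{j_i-1}$ for every $i\in\{1,\dots,L-1\}$. Put $T_i:=\sum_{k=1}^i w_{j_k}$ and $T_0:=0$. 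Since $w_j=n-j/m$, consecutive weights differ by exactly $1/m$, so $w_{j_i-1}=w_{j_i}+\tfrac1m=T_i-T_{i-1}+\tfrac1m$, and the hypotheses collapse to the single recurrence inequality
\[
  T_{i+1}\ \le\ \beta\,(T_i-T_{i-1})+\frac{\beta}{m}\qquad(i=1,\dots,L-1).
\]
From here on I use only this, the strict monotonicity of $(T_i)$, and the bound $w_{j_k}\ge w_q=n-q/m=1$.

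Two consequences of the recurrence pull against each other. First, $w_{j_{i+1}}=T_{i+1}-T_i\le\beta\,w_{j_i}+\beta/m\le2\beta\,w_{j_i}$, hence $w_1=w_{j_L}\le(2\beta)^{L-1}$ and therefore $L\ge 1+\log_{2\beta}(n-1/m)$ --- a large $n$ forces a long chain. Second, and this is where the constant $4$ enters, the recurrence cannot be sustained for many steps by a positive increasing sequence once $\beta<4$. Let $\rho_i:=T_{i+1}/T_i$; dividing the recurrence by $T_i\ge1$ gives $\rho_i\le f(\rho_{i-1})+\beta/m$ with $f(\rho):=\beta(1-1/\rho)$. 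The fixed-point equation $f(\rho)=\rho$ is $\rho^2-\beta\rho+\beta=0$, whose discriminant $\beta^2-4\beta$ is negative exactly when $\beta<4$; so for $\beta<4$ it has no real root, $f$ has no fixed point on $(1,\infty)$, and there the gap $\rho-f(\rho)=(\rho^2-\beta\rho+\beta)/\rho$ is strictly positive --- bounded below by a constant $d_\beta>0$ depending only on $\beta$ on the bounded subinterval $[1,1+2\beta]$. Now every $\rho_i$ lies in $[1,1+2\beta]$ until the argument ends (the lower bound because $(T_i)$ increases; the upper because $\rho_1=1+w_{j_2}\le1+2\beta$ and $\rho_i\le f(\rho_{i-1})+\beta/m<\beta+\beta/m$ for $i\ge2$). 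If $m$ is large enough that $\beta/m\le d_\beta/2$, the upper estimates $u_1:=1+2\beta$, $u_i:=f(u_{i-1})+\beta/m$ dominate the $\rho_i$ and drop by at least $d_\beta/2$ at each step as long as they remain in $[1,1+2\beta]$; hence $u_i<1$, and so $\rho_i<1$, for some $i\le N(\beta)$, with $N(\beta)=O(\beta/d_\beta)$ depending only on $\beta$. But if the chain is long enough that this $i$ satisfies $i\le L-1$, then $\rho_i>1$. Contradiction.

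To conclude, given $\beta<4$ compute $d_\beta$ and $N(\beta)$ as above, pick $n$ with $n-1\ge(2\beta)^{N(\beta)}$ (so the first consequence yields $L>N(\beta)$) and pick $m\ge\max\{1,\,2\beta/d_\beta\}$; the argument of the previous paragraph then produces an index $i$ violating the standing assumption, so the claimed inequality $\beta\,w_{j_i-1}<\sum_{k=1}^{i+1}w_{j_k}$ holds for some $i$. The one genuinely substantive point is that the sign change of $\beta^2-4\beta$ at $\beta=4$ is exactly what denies $f$ a fixed point above $1$ --- and hence what makes it impossible for the positive, strictly increasing partial sums $T_i$ to obey the recurrence for more than $N(\beta)$ steps --- so the threshold $4$ is intrinsic. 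The remaining items (legality of the seed $j_1=q$, the reduction of the weight inequalities to the linear recurrence via $w_{j_i-1}=w_{j_i}+\tfrac1m$, and verifying that the $\rho_i$ stay inside $[1,1+2\beta]$ so that the uniform gap $d_\beta$ is available) are careful but routine bookkeeping; the place where one must be most attentive is coordinating the two opposite size requirements, $n$ large relative to $N(\beta)$ and $m$ large relative to $\beta/d_\beta$ (which do not conflict, since $n$ and $m$ are chosen independently).
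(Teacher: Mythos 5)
Your proof is correct, and it follows the same overall strategy as the paper --- argue by contradiction, reduce the assumed inequalities $\sum_{k\le i+1}w_{j_k}\le\beta w_{j_i-1}$ to a two-term linear recurrence on the partial sums, observe that this forces the chain to be long when $n$ is large, and then derive a contradiction whose ultimate source is that the quadratic $\rho^2-\beta\rho+\beta$ has negative discriminant exactly when $\beta<4$. Where you diverge is in the quantitative engine: the paper tracks the ratio $z_i'/z_i=\bigl(\sum_{k<i}w_{j_k}\bigr)/w_{j_i}$ and shows it grows multiplicatively by a factor $\gamma>3/(\beta-1)>1$ until it exceeds $3$, whereas you track $\rho_i=T_{i+1}/T_i$ and show it must decrease by a fixed additive amount $d_\beta$ per step because $f(\rho)=\beta(1-1/\rho)$ has no fixed point, until it drops below $1$ and contradicts monotonicity of $T_i$. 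Since $z_{i+1}'/z_{i+1}=1/(\rho_i-1)$, these are two parametrizations of the same phenomenon, but your version buys genuine rigor at the two places where the paper is loosest: you carry the $1/m$ error terms explicitly through to the condition $\beta/m\le d_\beta/2$ (the paper ``omits the $\varepsilon_m$ to keep the reasoning simple''), and you re-seed the chain at $j_1=q$ so that $w_{j_1}=1$ and the chain provably descends all the way to job $1$, which makes the length bound $L\ge 1+\log_{2\beta}(n-1)$ airtight (the paper's seed ``the job processed at $t_1$'' does not obviously guarantee either $w_{j_1}\le 4$ or a long chain). One small caveat: because you prove the statement for a re-seeded chain rather than literally ``the sequence constructed as above,'' a reader applying the claim in the proof of the inapproximability theorem must check that the adversary's argument still goes through with your chain --- it does, since $w_{j_1}=1$ is negligible against $\beta w_{j_i-1}$ once $w_{j_i-1}$ is bounded below appropriately, and any residual slack is absorbed into the same $\varepsilon$ the paper already uses --- but this is worth a sentence when splicing the argument in.
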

\begin{proof}
Suppose to the contrary that there is no triple $n,m, i$ satisfying the requirements of the claim.
Then $\beta w_{j_i-1}\geq \sum_{k=1}^{i+1}w_{j_k}$ for all $n,m, i$.
If $m$ is large enough, then $w_{j_i}$ and $w_{j_i-1}$ are very close to each other, i.e. if $\frac{1}{m}\leq \varepsilon_m$ for some $\varepsilon_m$ then $w_{j_i-1} \leq w_{j_i}+\varepsilon_m$.
Thus we get $\beta w_{j_{i}-1}\leq \beta  w_{j_i}+\beta\varepsilon_m$.
By increasing $n$, the length of our sequence increases as well. Indeed, by the indirect assumption, we have 
\[ w_{j_{i+1}} \leq \sum_{k=1}^{i+1}w_{j_k} \leq \beta w_{j_{i}-1} <  4w_{j_{i}-1} < 4w_{j_{i}}+4\varepsilon_m.\]
Since we also have $w_{j_1}\leq 4$ and $w_j\geq 1$ for all jobs $j$, this implies that the number of elements in the sequence is at least $\log_5 n$.
Let us define $z_i=w_{j_i}$ and $z_i'=\sum_{k=1}^{i-1}z_k$.
By the indirect assumption, $\beta (z_{i}+\frac{1}{m}) \geq z_{i+1}+z_{i+1}'$, thus we have
\[ (\beta-1)z_i-z_i'+4\varepsilon_m \geq z_{i+1}  . \]
We omit the $\varepsilon_m$ for now to keep the reasoning simple.
We claim that $\frac{z_{i+1}'}{z_{i+1}}\geq \gamma \frac{z_i'}{z_i}$ for some $\gamma>1$ which only depends on $\beta$.
We have $(\beta-1)z_{i}-z_i'-z_{i+1}\geq 0$, and thus we get $3z_i>z_i'$ assuming $\beta<4$.
Observe that $\frac{z_i'}{z_i}$ is minimal if $(\beta-1)z_i-z_i' = z_{i+1}$.
By transforming the equation
\[ \gamma \frac{z_{i}'}{z_{i}}= \frac{z_{i+1}'}{z_{i+1}}=\frac{z_i+z_i'}{(\beta-1)z_i-z_i'}, \]
we get
\begin{align}\label{eqgamma}
\gamma=\frac{z_{i}^2+z_{i}z_i'}{(\beta-1)z_iz_i'-z_i'^2} .
\end{align}
Using $(z_i-z_i')^2\geq 0$, $(\beta-2)<2$ and $z_iz_i'>0$ we get
\begin{align*}
z_{i}^2+z_i'^2 > 2z_iz_i'
\end{align*}
or equivalently
\begin{align*}
z_{i}^2+z_{i}z_i' > 3z_iz_i'-z_i'^2 
\geq 3z_iz_i'-\frac{3}{\beta-1}z_i'^2
=\frac{3}{\beta-1}((\beta-1)z_iz_i'-z_i'^2).
\end{align*}
Thus $\gamma>\frac{3}{\beta-1}$.
It follows that there exists some $N$ such that $\frac{z_N'}{z_N}>3$ and thus we get
\[ \beta z_N < 4 z_N \leq z_N+z_N',   \]
contradicting the indirect assumption.
\end{proof}

By Claim~\ref{cl:inap}, there exists a time point $t_i$ with $\sum_{j:\pi(j)\geq i} w_j>\beta B_i$. By setting $t_{i'}=0$ (or very close to $0$) for $i'<i$  and $t_{i'}=1$ (or very close to 1) for $i'\geq i$, the schedule can only be a $(\beta- \varepsilon)$-approximation if we do not know the resource arrival times in advance.
\end{proof}

\section*{Acknowledgement}

The authors are grateful to Erika B\'erczi-Kov\'acs and to Matthias Mnich for the helpful discussions. Krist\'of B\'erczi was supported by the J\'anos Bolyai Research Fellowship of the Hungarian Academy of Sciences and by the ÚNKP-19-4 New National Excellence Program of the Ministry for Innovation and Technology. Project no. NKFI-128673 has been implemented with the support provided from the National Research, Development and Innovation Fund of Hungary, financed under the FK\_18 funding scheme. Tamás Király was supported by the Hungarian National Research, Development and Innovation Office -- NKFIH,
grant number K120254. This research was supported by Thematic Excellence Programme, Industry and Digitization Subprogramme, NRDI Office, 2019.

%
%
%
%
\bibliographystyle{abbrv}
\bibliography{references}

\end{document}